\newcommand{\del}[0]{\partial}
\newtheorem{theorem}{Theorem}
\begin{document}

\title{Globally optimal  interferometry with lossy twin Fock probes}
\author{T.J.\,Volkoff}
\affiliation{Theoretical Division, Los Alamos National Laboratory, Los Alamos, NM, USA.}
\author{Changhyun Ryu}
\affiliation{MPA-Q, Los Alamos National Laboratory, Los Alamos, NM, USA.}

\begin{abstract}
Parity or quadratic spin (e.g., $J_{z}^{2}$) readouts of a Mach-Zehnder (MZ) interferometer probed with a twin Fock input state allow to saturate the optimal  sensitivity attainable among all mode-separable states with a fixed total number of particles, but only when the interferometer phase $\theta$ is near zero. When more general Dicke state probes are used, the parity readout saturates the quantum Fisher information (QFI) at $\theta=0$, whereas better-than-standard quantum limit performance of the $J_{z}^{2}$ readout is restricted to an $o(\sqrt{N})$ occupation imbalance. We show that a method of moments readout of two quadratic spin observables $J_{z}^{2}$ and $J_{+}^{2}+J_{-}^{2}$ is globally optimal for Dicke state probes, i.e., the error saturates the QFI for all $\theta$. In the lossy setting, we derive the time-inhomogeneous Markov process describing the effect of particle loss on twin Fock states, showing that method of moments readout of four at-most-quadratic spin observables is sufficient for globally optimal estimation of $\theta$ when two or more particles are lost. The analysis culminates in a numerical calculation of the QFI matrix for distributed MZ interferometry on the four mode state $\ket{{N\over 4},{N\over 4},{N\over 4},{N\over 4}}$ and its lossy counterparts, showing that an advantage for estimation of any linear function of the local MZ phases $\theta_{1}$, $\theta_{2}$ (compared to independent probing of the MZ phases by two copies of $\ket{{N\over 4},{N\over 4}}$) appears when more than one particle is lost.
\end{abstract}
\maketitle

\section{Introduction}
\label{sec:intro}
In the context of optical interferometry with nonclassical states of light, the optical twin Fock (TF) state was introduced as a candidate probe state that minimizes phase fluctuations between the arms of an interferometer \cite{PhysRevLett.71.1355}. The operating principle of the interferometer is that although the input TF state has maximal phase uncertainty prior to entering a Mach-Zehnder (MZ) interferometer, its phase variance after the first beamsplitter is lower by a factor of $O(I^{-1})$ compared to a probe state consisting of single-mode lasers of the same total intensity $I$ per measurement interval (lowering the optical standard quantum limit sensitivity of $O(I^{-1})$ to optical Heisenberg limit sensitivity $O(I^{-2})$). For present-day laser-based gravitational wave detectors, suppression of phase and amplitude fluctuations across the operating spectrum motivates the use of novel non-classical electromagnetic field probe states.  In the LIGO interferometer, for example, phase fluctuations in the quantum electromagnetic field give rise to the photonic shot noise that constitutes the primary limitation to sensitivity in the high-frequency domain. Fluctuations in the amplitude quadrature give rise to the radiation pressure noise that constitutes the primary limitation to sensitivity in the low-frequency domain \cite{PhysRevD.23.1693,PhysRevX.13.041021}. Therefore, preparation of a probe field with frequency-dependent quadrature squeezing enables globally improved noise spectral densities \cite{PhysRevLett.124.171102,PhysRevX.13.041021}. However, even at fixed wavelength, state of the art optical TF states with $O(1)$ photons are produced only probabilistically \cite{opticalfock}, and not with intensities large enough to be relevant for application in LIGO. 

On the other hand, matter-wave interferometry with ensembles of phase-coherent atoms provides a alternative framework for gravitational wave detection  \cite{PhysRevD.78.122002,Abe_2021}. Analogously to the optical TF state, a TF state of massive bosons produced by strong repulsion of atoms in a double-well optical trap could potentially improve the sensitivity of such matter-wave interferometers beyond the atomic standard quantum limit ($1/N$ scaling of the optimal estimator variance, where $N$ is the number of atoms). Whereas traditional atom interferometers apply optimal Bragg splitting schemes to non-entangled states such as a Bose-Einstein condensate to produce an optimal non-entangled probe state \cite{PhysRevLett.67.181,PhysRevA.71.043602}, the input TF state exhibits particle entanglement \cite{PhysRevLett.91.097902} and remains entangled throughout the MZ interferometer sequence. As far as experimental atom interferometry is concerned, the sensitivity boost provided by TF states can allow smaller MZ loops to be utilized in the interferometry sequence. TF states of neutral atoms in an optical dipole trap have been prepared with $O(10^{4})$ atoms \cite{tmw}. 

The present work focuses on the parameter estimation setting defined by a Mach-Zehnder (MZ) interferometer \cite{PhysRevA.33.4033}, which applies the operation $e^{i{\pi\over 2}J_{x}}e^{-i\theta J_{z}}e^{-i{\pi\over 2}J_{x}}=e^{-i\theta J_{y}}$  to the probe state $\rho$, which is generally a mixed state. From the parametrized state $\rho_{\theta}:= e^{-i\theta J_{y}}\rho e^{i\theta J_{y}}$, the estimate $\tilde{\theta}$ of the parameter $\theta$ is calculated, and it satisfies the one-shot quantum Cram\'{e}r-Rao inequality
\begin{equation}
(\Delta \tilde{\theta})^{2}\big\vert_{\theta} \ge {1\over \mathcal{F}_{\theta}}
\label{eqn:qcrb}
\end{equation} 
where $\mathcal{F}_{\theta}$ is the quantum Fisher information (QFI) based on the symmetric logarithmic derivative \cite{holevobook}. The relevant expressions for the QFI depend on whether the probe state $\rho$ is pure or mixed, and we refer to \cite{yu} and references therein for explicit formulas. A globally optimal quantum sensing protocol produces an estimator $\tilde{\theta}$ that saturates (\ref{eqn:qcrb}) at all parameter values $\theta$.

In the remainder of the present section, we review relevant background on quantum sensing aspects of twin Fock states. In Section \ref{sec:tt}, we move on to considering the possibility of saturating the inequality (\ref{eqn:qcrb}) when applying practical readout schemes to pure, but imperfectly prepared, twin Fock states. We generalize the method of moments error of a single readout observable to a generalized signal-to-noise ratio that quantifies correlated errors of a list of noncommuting observables. Observables are identified for which the generalized signal-to-noise ratio globally saturates the QFI (i.e., saturates the QFI for all $\theta$). Sections \ref{sec:lossytf} and \ref{sec:grad} respectively identify lists of observables associated with a generalized signal-to-noise ratio which globally saturates the QFI for the case of lossy twin Fock probe states and for instances of lossy, spatially extended twin Fock probe states. The spatially extended twin Fock states probe multiple spatially separated interferometric phase shifts, which define a multiparameter sensing setting relevant to, e.g., spatially resolved gravimetry and magnetometry. Section \ref{sec:disc} summarizes the results and discusses directions for future research.

We now define the TF state and provide background on their known properties related to quantum metrology. The TF state of $N$ bosonic atoms distributed between two orthogonal single particle modes $\ket{0}$, $\ket{1}$ is the zero $J_{z}$ weight vector in a spin-$N/2$ representation of SU(2). It is the $n={N\over 2}$ case of the more general $N$-particle Dicke state often written as
\begin{equation}
    \ket{N-n,n}:={1\over \sqrt{{N\choose m}}}\sum_{\text{Ham}(x)=n}\ket{x_{1}}_{1}\otimes \cdots\otimes \ket{x_{N}}_{N}
    \label{eqn:tf}
\end{equation}
where $x=x_{1}\ldots x_{N}\in \lbrace 0,1\rbrace^{N}$ is a binary string identified with a computational basis element of $N$ qubits,  and $\text{Ham}(x)$ is its Hamming weight. The notation in the left hand side owes to the Schwinger boson form of the $J_{z}$ operator, $2J_{z}=a^{\dagger}a-b^{\dagger}b$ so that $J_{z}\ket{N-n,n}=({N\over 2}-n)\ket{N-n,n}$. The state (\ref{eqn:tf}) is \textit{mode-separable} because the first-quantized state is actually entangled: if the particles themselves are bipartitioned in any way (i.e., if the $N$ qubits are bipartitioned in any way in (\ref{eqn:tf})), and one of the partitions is traced over, the resulting state is impure.  

Our model of the MZ interferometer is the standard one: a two mode, mode-separable bosonic state $\ket{\psi}$ is parametrized by a phase according to
\begin{equation}
\ket{\psi}\mapsto \ket{\psi(\theta)}=e^{i{\pi\over 2}J_{x}}e^{-i\theta J_{z}}e^{-i{\pi\over 2}J_{x}} = e^{-i\theta J_{y}}\ket{\psi}.
\label{eqn:tttt}
\end{equation}
Eq.(\ref{eqn:tttt}) is an example of the shift model of parameter estimation \cite{holevobook}.
The TF state possesses the following optimality property for sensing the phase difference in the arms of a MZ interferometer: it has the largest quantum Fisher information on the unitary path generated by $J_{y}$ over all mode-separable probe states with a fixed total number of particles $N=a^{\dagger}a+b^{\dagger}b$, $N$ even \cite{PhysRevA.90.025802}. The operator $J_{y}$ generates the phase difference dynamics of a MZ interferometer.

As pointed out by Lang and Caves, mode-separable states (or product states in the setting of optical MZ interferometry) are the only sensible input states to consider in single-parameter MZ interferometry. The whole purpose of the first beamsplitter is to generate the quantum coherence required for the phase difference sensing task. If one has access to mode-entangled states at the input of the MZ interferometer, an optimal input state is a Greenberger-Horne-Zeilinger (GHZ) state (recall that for $N$ atoms, the GHZ states are a class of states defined as an equal amplitude superposition of the highest and lowest eigenvector of a total spin operator $\vec{n}\cdot \vec{J}$ in the spin-$N/2$ representation of $\mathfrak{su}(2)$, where $\vec{n}\in\mathbb{R}^{3}$).  On the other hand, in the setting of multiparameter estimation of an SU(2) element, the TF state has a similar basic optimality property as the GHZ state has in single-parameter MZ interferometry. Consider the parametrized unitary dynamics $U(\theta)=e^{-i\theta_{1}J_{x}-i\theta_{2}J_{y}}$ and task of minimizing the sum of the mean squared errors of estimates of $\theta_{1}$ and $\theta_{2}$. This total error is bounded below by $\text{tr}(\mathcal{F}^{-1})$, where $\mathcal{F}$ is the QFI matrix \cite{PhysRevA.95.012305}. This quantity, in turn, is lower bounded by ${1\over \text{Var}J_{x} + \text{Var}J_{y}}$, with equality achieved on pure states with a reflection symmetry about $J_{x}$ or $J_{y}$. The minimal value is obtained when: 1. the variances are equal, 2. the state is centered in the spherical phase space (i.e., $\langle J_{x}\rangle = \langle J_{y}\rangle =\langle J_{z}\rangle=0$), and 3. $\langle J_{z}^{2}\rangle=0$, all three of which are satisfied by the TF state, with the last condition being unique to the TF state \cite{genoni}. Fujiwara's condition for achievability of the QFI bound is also satisfied \cite{fuji}, so there is an estimator that obtains this noise value. Indeed, method of moments estimation of $\vec{\theta}$ by readout of the nonlinear observables $O_{j}\ket{{N\over 2},{N\over 2}}\bra{{N\over 2},{N\over 2}}O_{j}$, with $O_{1}=J_{x}$, $O_{2}=J_{y}$, allow to saturate quantum Fisher information (QFI) matrix for $\vec{\theta}\rightarrow (0,0)$ \cite{Gessner2020}. Saturation of the QFI matrix can also be obtained using four operators which are at-most-quadratic in the spin operators \cite{Fadel_2023}.

Recent work has shown that Dicke states $\ket{N-m,m}$ can be prepared with $O(m\log {N\over m})$-depth quantum circuits that allow two-qubit operations between any qubit registers \cite{bart}. Analog schemes for generating approximate TF states of a system of double-well bosons include implementation of the two-axis countertwisting Hamiltonian \cite{PhysRevA.92.013623}, and quantum alternating operator ansatz circuits that approximately convert a tensor product state to the TF state for large $N$ \cite{PhysRevA.95.062317}. Experimentally, small twin Fock states have also been generated in internal states of spinor Bose-Einstein condensates \cite{apell}, and in photons in orthogonal polarization modes \cite{PhysRevLett.107.080504}. Optimal method of moments readouts for specific single-parameter estimation problems, including MZ interferometry, with probe states consisting of coherent states and spin squeezed states of spin-1 Bose-Einstein condensates have been analyzed in \cite{PhysRevResearch.5.013087,Niezgoda_2019,PhysRevA.104.042415,Mao2023}.

\section{Pure imperfect TF probes\label{sec:tt}}

Unlike other entangled probe states of two-mode bosons such as one-axis twisted states \cite{PhysRevA.47.5138}, the physical origin of the sensitivity of TF states for interferometry is not spin-squeezing \cite{PhysRevLett.112.155304}. Rather, it is the property of low uncertainty in the difference of phase (near a phase difference of zero or $\pi$) appearing between the paths of the MZ interferometer after the TF state undergoes the first beamsplitter \cite{PhysRevA.68.023810}. In fact, the Dicke state amplitudes of the $\pi/2$-rotated TF state are greatest on the $\ket{N,0}$ and $\ket{0,N}$ states, similar to an approximate  GHZ state. Although a GHZ state in the arms of the MZ interferometer allows to obtain the minimal uncertainty of an estimator of the MZ phase, the input state required to generate such a configuration is simply a rotated GHZ state, which is clearly not any easier to generate. By contrast, in the case of ultracold atoms, the TF state can be prepared as the ground state of a $N$ atoms in a double-well optical trap \cite{PhysRevA.90.063601,Grond_2010}, which makes it practically advantageous compared to a GHZ probe. Further, loss of a single particle from a GHZ state renders it useless for interferometry beyond $O(N)$ scaling of the QFI (i.e., standard quantum limit (SQL) scaling). Photon-loss-robust optical states for MZ interferometry were explored as GHZ alternatives in Ref.\cite{PhysRevLett.102.040403}.

In this section, we consider the case of pure, but imperfectly prepared, twin Fock states. Such states coincide with sequences of Dicke states of the form $\ket{{N\over 2}\pm m, {N \over 2}\mp m}$ for $m$ scaling as $o(N)$.  The quantum Fisher information (QFI)  $\mathcal{F}$ for MZ interferometry using an arbitrary Dicke state (for even $N$) $\ket{{N\over 2}+m,{N\over 2}-m}$ is given by  \cite{RevModPhys.90.035005}
\begin{equation}
\mathcal{F}(m)= 4\text{Var}J_{y}={N^{2}\over 2}-2m^{2}+N,
\label{eqn:qqfi}
\end{equation} independent of $\theta$. The particle number dependence in (\ref{eqn:qqfi}) indicates the possibility of Heisenberg scaling, i.e., $O(N^{2})$, but not the Heisenberg limit $N^{2}$, which is only achievable by a GHZ probe state. In fact, this value of the quantum Fisher information is attained for any spin generator $\vec{n}\cdot \vec{J}$ with $\vec{n}$ a unit vector in the $xy$-plane \cite{PhysRevA.82.012337}. Note that when $m=\pm {N\over 2}$, this expression for the QFI also covers the standard quantum limit (SQL) value $N$. One concludes that Dicke states exhibit asymptotic Heisenberg scaling for interferometry if  the magnitude of the $J_{z}$ component scales as $\lambda N$ with $\lambda <1/2$.

To locally saturate the QFI, we first review two practical measurement schemes that have appeared in the literature. A Bayesian estimation scheme locally achieving the same scaling as the QFI was discussed in Refs. \cite{PhysRevLett.71.1355, Meiser_2009}. However, a single mode parity measurement (we will use $(-1)^{b^{\dagger}b}$ as the single-mode parity operator) exhibits the same scaling while having practical advantages \cite{PhysRevA.54.R4649,PhysRevA.61.043811,PhysRevLett.92.209301,PhysRevA.68.023810}. A simple calculation using the small angle asymptotics of Wigner's little-$d$ function allows to show that with the parity signal \begin{equation}g(\theta):= \langle (-1)^{b^{\dagger}b}\rangle_{e^{-i\theta J_{y}}\ket{{N\over 2}+m,{N\over 2}-m}},\label{eqn:fifi}\end{equation} the method of moments error $(\Delta \tilde{\theta})^{2}$ for extracting an estimate $\tilde{\theta}$ of $\theta$ is
\begin{equation}
    (\Delta \tilde{\theta})^{2}\vert_{\theta=0}={1- g(\theta)^{2}\over g'(\theta)^{2}}\big\vert_{\theta=0}= \left( {N^{2}\over 2}-2m^{2}+N\right)^{-1}
    \label{eqn:parpar}
\end{equation}
where we recall the formula for the method of moments error \begin{equation}(\Delta\tilde{\theta})^{2}={\text{Var}_{\ket{\psi(\theta)}}A\over \left( {d\langle A \rangle_{\ket{\psi(\theta)}}\over d\theta}\right)^{2}}\label{eqn:methmom}\end{equation} of a phase estimator $\tilde{\theta}$ obtained from measurement of observable $A$. The reciprocal QFI $1/\mathcal{F}$ is a lower bound to (\ref{eqn:methmom}) for all $\theta$.
Eq.(\ref{eqn:parpar}) implies that at $\theta=0$, the method of moments error of $\tilde{\theta}$ obtained from the parity signal Eq.(\ref{eqn:fifi}) saturates the reciprocal QFI for the MZ interferometer, regardless of which Dicke state is taken as input. The error of the method of moments  estimator $\tilde{\theta}$ at arbitrary angles is shown in Fig. \ref{fig:f1} for $N=64$, $m=0, 4, 8, 16$, in which is it seen that the ideal TF state has a revival of saturating the QFI near $\theta=\pi/2$. The method of moments error for a parity observable also saturates the QFI for the GHZ state \cite{PhysRevLett.96.010401}, but globally over all possible phases \cite{PhysRevA.68.023810}. From the plots, we conclude that for small $\theta$, the effect of the interferometer is to change the expected parity of the computational basis support of the rotated Dicke state.

It should be noted that a parity measurement requires single-particle resolution at a detector. For high intensity optical systems or high-occupation two-mode bosonic system, such an idealistic measurement is not always feasible. By constrast, it would be desirable to use low moments of a measurement of a total spin operator to extract a high-precision estimator.  It has been shown that for a TF state input, the method of moments error for an estimator obtained from measurement of $J_{z}^{2}$ saturates the QFI at $\theta=0$ \cite{PhysRevA.57.4004}. 
However, unlike the parity measurement, the $J_{z}^{2}$ measurement does not saturate the Dicke state QFI for arbitrary $m$, and it remains to consider the question of how robust the Heisenberg scaling of this estimation strategy is when a Dicke state is prepared. The following theorem 

\begin{theorem}
\label{thm:ooo}
Let $\mathcal{F}(m)$ be the QFI for a MZ interferometer with input Dicke state $\ket{{N\over 2}+m,{N\over 2}-m}$, and let $\Delta \tilde{\theta}$ be the method of moments error Eq. (\ref{eqn:methmom}) for a $J_{z}^{2}$ measurement. Then,
\begin{equation}
\lim_{N\rightarrow \infty} {(\Delta \tilde{\theta})^{-2} \vert_{\theta=0}\over \mathcal{F}(m)} ={1\over 4m^{2}+1}.
\label{eqn:asas}
\end{equation}
If $m$ scales with $N$ as $m=o(\sqrt{N})$, then $\lim_{N\rightarrow \infty}N(\Delta \tilde{\theta})^{2}\vert_{\theta=0} = 0$.
\end{theorem}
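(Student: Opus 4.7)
The plan is to interpret Eq.~(\ref{eqn:methmom}) evaluated on $e^{-i\theta J_y}\ket{{N\over 2}+m,{N\over 2}-m}$ at $\theta=0$ as the limit $\theta\to 0^+$: the formula is indeterminate $0/0$ there because $J_z^2\ket{{N\over 2}+m,{N\over 2}-m}=m^2\ket{{N\over 2}+m,{N\over 2}-m}$, which makes both $\Var(J_z^2)_{\theta=0}$ and $\partial_\theta\langle J_z^2\rangle_{\theta=0}$ vanish. The task is then to compute the leading-order coefficients of the numerator and denominator Taylor series in $\theta$, divide, and take $N\to\infty$.

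For the signal, using the Heisenberg rotation $e^{i\theta J_y}J_z e^{-i\theta J_y}=J_z\cos\theta-J_x\sin\theta$ together with $\langle J_x^2\rangle_{j,m}=\tfrac12(j(j+1)-m^2)$ (where $j=N/2$) and $\langle\{J_x,J_z\}\rangle_{j,m}=0$, the closed form $\langle J_z^2\rangle_\theta=m^2\cos^2\theta+\tfrac12(j(j+1)-m^2)\sin^2\theta$ follows immediately. Differentiating gives $\partial_\theta\langle J_z^2\rangle_\theta=\tfrac12\sin(2\theta)(j(j+1)-3m^2)=\theta(j(j+1)-3m^2)+O(\theta^3)$.

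For the variance I would introduce the centered operator $B(\theta):=e^{i\theta J_y}J_z^2 e^{-i\theta J_y}-\langle J_z^2\rangle_\theta$, so that $\Var(J_z^2)_\theta=\langle B(\theta)^2\rangle$. Because $B(0)=J_z^2-m^2$ annihilates both $\ket{{N\over 2}+m,{N\over 2}-m}$ and its dual, all $O(\theta^0)$ and $O(\theta)$ contributions to $\langle B(\theta)^2\rangle$ drop out, leaving $\Var(J_z^2)_\theta=\theta^2\,\|B'(0)\ket{{N\over 2}+m,{N\over 2}-m}\|^2+O(\theta^3)$. Since $B'(0)=i[J_y,J_z^2]=-\{J_x,J_z\}$ (the subtracted derivative of $\langle J_z^2\rangle_\theta$ at $\theta=0$ is zero), a ladder-operator calculation yielding $\{J_x,J_z\}\ket{j,m}=(m+\tfrac12)c_+\ket{j,m+1}+(m-\tfrac12)c_-\ket{j,m-1}$ with $c_\pm=\sqrt{j(j+1)-m(m\pm1)}$ produces $\|\{J_x,J_z\}\ket{j,m}\|^2=(2m^2+\tfrac12)(j(j+1)-m^2)-2m^2$.

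Dividing the expansions cancels the overall $\theta^2$ and gives $(\Delta\tilde\theta)^{-2}|_{\theta=0}=(j(j+1)-3m^2)^2/[(2m^2+\tfrac12)(j(j+1)-m^2)-2m^2]$. Substituting $j=N/2$ and retaining the highest powers of $N$ yields $(\Delta\tilde\theta)^{-2}|_{\theta=0}\sim N^2/[2(4m^2+1)]$; dividing by $\mathcal{F}(m)\sim N^2/2$ produces the claimed limit $1/(4m^2+1)$. The second assertion then follows by rearrangement, $N(\Delta\tilde\theta)^2|_{\theta=0}\sim 2(4m^2+1)/N$, which tends to zero uniformly whenever $m^2=o(N)$ (verifying that the $-2m^2$ correction in the numerator and the subleading $N/2,\,-3m^2$ corrections in the denominator are indeed negligible throughout this regime). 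The primary obstacle is the correct Taylor bookkeeping; once the annihilation property $B(0)\ket{{N\over 2}+m,{N\over 2}-m}=0$ is exploited to kill the spurious leading orders, the remainder is standard spin algebra.
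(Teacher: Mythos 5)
Your proposal is correct and follows essentially the same route as the paper: both reduce the $0/0$ expression at $\theta=0$ to the ratio of the leading $\theta^{2}$ coefficients of $(d\langle J_{z}^{2}\rangle_{\theta}/d\theta)^{2}$ and $\mathrm{Var}(J_{z}^{2})_{\theta}$, and your exact value $(j(j+1)-3m^{2})^{2}/[(2m^{2}+\tfrac12)(j(j+1)-m^{2})-2m^{2}]$ is precisely the reciprocal of the paper's Eq.~(\ref{eqn:rrr}), since the sum over $u$ there equals $4\Vert\{J_{x},J_{z}\}\ket{j,m}\Vert^{2}$. The only (cosmetic) difference is that you extract the variance coefficient via $B'(0)=i[J_{y},J_{z}^{2}]=-\{J_{x},J_{z}\}$ and the annihilation property $B(0)\ket{j,m}=0$, which bypasses the full fourth-moment formula of Appendix~\ref{sec:appa}.
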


\begin{proof}
We use the moments of $J_{z}$ in Appendix \ref{sec:appa} to find that at $\theta=0$, the method of moments error $(\Delta \tilde{\theta})^{2} \vert_{\theta=0}$ is given by
\begin{widetext}
\begin{equation}
    (\Delta \tilde{\theta})^{-2} \vert_{\theta=0}={ \sum_{u=0}^{1}(1+(-1)^{u}2m)^{2}({N\over 2}+(-1)^{u}m+1)({N\over 2}-(-1)^{u}m)\over 4\left( {N^{2}\over 4} - 3m^{2}+{N\over 2}\right)^{2}}.
    \label{eqn:rrr}
\end{equation}
\end{widetext}
The ratio of the reciprocal of (\ref{eqn:rrr}) to the QFI is given by
\begin{equation}
    {(\Delta \tilde{\theta})^{-2} \vert_{\theta=0}\over \mathcal{F}(m)}={\left(1-{2m^{2}\over {N^{2}\over 4} - m^{2}+{N\over 2}} \right)^{2}\over 4m^{2}+1}
    \label{eqn:utut}
\end{equation}
which satisfies (\ref{eqn:asas}). The last statement in the theorem follows from taking $m$ to satisfy the scaling assumption in the statement, and noting that (\ref{eqn:utut}) implies that
${1\over N}(\Delta \tilde{\theta})^{-2} \sim {1\over N(4m^{2}+1)}\left( {N^{2}\over 2} - 2m^{2}+N\right)$.
\end{proof}

 \begin{figure*}[t!]
    \centering
    \includegraphics[scale=1]{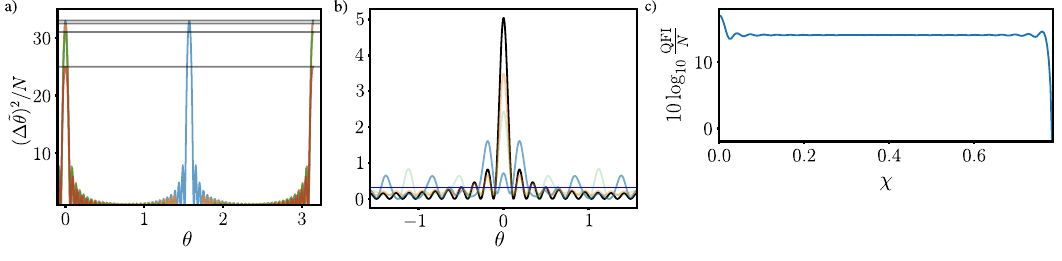}
    \caption{a) Error of the method of moments estimator $\tilde{\theta}$ of MZ angle $\theta$ with parity readout and with pure Dicke state probe ($N=64$; blue $m=0$, orange $m=4$, green $m=8$, red $m=16$). Black horizontal lines are the QFI. b) One-shot posterior probability density for the interferometric phase $\theta$ with the bosonic phase diffused probe state (\ref{eqn:iii}) ($N=40$; $\chi= 0 \text{ (black)}, 0.20, 0.39, 0.59, \pi/4 \text{ (dark blue)}$). Phase diffusion reduces the central peak visibility. c) The QFI for (\ref{eqn:iii}) exhibits greater than $O(N)$ scaling except for a small domain of $\chi$ around $\pi/4$.}
    \label{fig:f1}
\end{figure*}

Note that $\lim_{N\rightarrow \infty}N(\Delta \tilde{\theta})^{2}\vert_{\theta=0}=0$ if and only the reciprocal of the method of moments error is asymptotically greater than the SQL value $N$ for the QFI by more than a constant factor. We conclude that the $J_{z}^{2}$ measurement is useful for better-than-SQL interferometry with Dicke states of the form $\ket{{N\over 2}+o(\sqrt{N}),{N\over 2}-o(\sqrt{N})}$. Although a $J_{z}^{2}$ readout does not give a method of moments estimator that saturates the QFI, the correlation of this signal with the parity signal can be used to broaden the domain of $\theta$ where QFI saturation occurs, as shown Fig. \ref{fig:f1}a) (although global saturation still does not occur). Specifically, taking $\vec{O}=(O_{1},\ldots,O_{K})$ to be a row vector of $K$ observables, the inequality \cite{Gessner2020}
\begin{eqnarray}
    &{} {d\langle \vec{O}\rangle_{\rho_{\theta}}\over d\theta}\text{Cov}_{\rho_{\theta}}(\vec{O})^{-1}{d\langle \vec{O}\rangle_{\rho_{\theta}}^{\intercal} \over d\theta}\le \mathcal{F}(\rho_{\theta}) \\
    &{} \text{Cov}_{\rho_{\theta}}(\vec{O}):= \langle (\vec{O}-\langle \vec{O}\rangle_{\rho_{\theta}}) \circ (\vec{O}^{\intercal}-\langle \vec{O}^{\intercal}\rangle_{\rho_{\theta}}) \rangle_{\rho_{\theta}}\nonumber
    \label{eqn:bigmom}
\end{eqnarray}
holds for general parameterized states $\rho_{\theta}$, and one can consider $K=2$ with
$O_{1}=(-1)^{b^{\dagger}b}$, $O_{2}=J_{z}^{2}$. In (\ref{eqn:bigmom}), the symbol $\circ$ is the Jordan product of matrices $X\circ Y:={1\over 2}XY + {1\over 2}YX$. The inequality (\ref{eqn:bigmom}) generalizes the relationship between the signal-to-noise ratio in (\ref{eqn:methmom}) and the QFI, therefore the left hand side of (\ref{eqn:bigmom}) can be considered as a generalized signal-to-noise ratio.  By considering more quadratic spin observables in (\ref{eqn:bigmom}), it is possible to dispense with the parity readout and globally saturate (\ref{eqn:bigmom}) for all $\theta$. Taking $O_{1}=J_{z}^{2}$ and $O_{2}={1\over 2}(J_{+}^{2}+J_{-}^{2})$ if $m=0$, and taking $O_{1}=J_{z}^{2}$, $O_{2}={1\over 2}(J_{+}^{2}+J_{-}^{2})$, $O_{3}=J_{x}$ if $m\ge 1$, one obtains minimal operator lists for which (\ref{eqn:bigmom}) is globally satisfied. A similar conclusion is also obtained in the lossy TF setting in Section \ref{sec:lossytf}. Instead of providing individual proofs of these statements, we provide the full details for the QFI saturation in the case of gradiometry with doubled TF probes (Section \ref{sec:grad} and Appendix \ref{sec:appc}). The method of proof for the simpler statements of global saturation in this section and in Section \ref{sec:lossytf} follow straightforwardly. 

It is worth noting that the $N$-particle one-axis twisted probe state \cite{PhysRevA.47.5138}
\begin{equation}
    \ket{\psi_{\text{OAT}}(t)}:= e^{-it J_{z}^{2}}\ket{+}^{\otimes N}
    \label{eqn:tweltwel}
\end{equation}
at oversqueezed interaction times $t=O(N^{-\alpha})$, with $0<\alpha<1/2$, has asymptotic QFI equal to ${N(N+1)\over 2}$ for MZ interferometry, which is asymptotically equal to that of the TF probe state of the same particle number \cite{volkmart}. However, a twist-untwist protocol \cite{PhysRevLett.116.053601,PhysRevResearch.4.013236} (which is
obtained from Eq. (\ref{eqn:tweltwel}) by applying the inverse one-axis twisting $e^{+itJ_{z}^{2}}$ 
after the MZ interferometer) combined with a total spin readout only saturates the QFI near $\theta=0$, and the same is true for a parity and $J_{z}^{2}$ readout without the untwist operation. However, we are not aware of a quadratic spin readout that globally saturates the QFI for a one-axis twisted probe state in this interaction time domain.

Lastly, we note that a measurement in the $J_{y}$ phase basis 
\begin{equation}
    \ket{\tilde{k}}:={1\over \sqrt{N+1}}\sum_{n=0}^{N}e^{2\pi i n k \over N+1}e^{i{\pi\over 2}J_{x}}\ket{N-n,n} \, , \, k=0,\ldots,N
    \label{eqn:ps}
\end{equation}
(eigenvectors of $e^{i{\pi\over 2}J_{x}}Ce^{-i{\pi\over 2}J_{x}}$ where $C$ is the cyclic shift $C\ket{N-n,n}=\ket{N-n-1,n+1}$ with addition modulo $N+1$)
contains sufficient information to form a globally optimal estimator of $\theta$. This can be shown by numerical computation of the classical Fisher information with respect to measurement of the complete, orthonormal basis (\ref{eqn:ps}) and verifying that it saturates the black horizontal lines in Fig. \ref{fig:f1}a) for all $\theta$.

\section{Phase-diffused and lossy TF probes\label{sec:lossytf}}

Interatomic interactions during the interferometric sequence lead to the phenomenon of bosonic phase diffusion \cite{PhysRevLett.78.4675,PhysRevLett.75.2944,PhysRevLett.98.030407}. Unlike optical phase diffusion, which is modeled by applying random unitary optical rotations  to a continuous-variable probe state \cite{Vidrighin2014,PhysRevLett.106.153603}, bosonic phase diffusion is a unitary error which reduces the phase coherence in the arms of the interferometer. In a modern context, the fact that one-axis twisting can reduce the performance of atom interferometry may seem surprising given that the one axis-twisted split Bose-Einstein condensate $e^{-i\chi J_{z}^{2}}e^{-i{\pi\over 2}J_{y}}\ket{N,0}$ exhibits Heisenberg scaling as a probe for MZ interferometric phase (i.e., $J_{y}$ rotation) for $O(N^{-1/2})\lesssim \chi \lesssim o(1)$, transitioning to Heisenberg limit scaling at $\chi \sim O(1)$ \cite{volkmart}. To demonstrate the effect of bosonic phase diffusion on interferometric sensitivity, we utilize a well-known Bayesian parameter estimation method \cite{PhysRevLett.71.1355,PhysRevA.73.011801,PhysRevA.76.013804} and the phase-diffused probe state
\begin{equation}
    \ket{\psi_{N}(\chi)}:= e^{-i\chi J_{z}^{2}}e^{-i{\pi\over 2}J_{y}}\ket{{N\over 2},{N\over 2}}.
    \label{eqn:iii}
\end{equation}
Applying the MZ interferometer to (\ref{eqn:iii}) and using the fidelity with $e^{-i{\pi\over 2}J_{y}}\ket{{N\over 2},{N\over 2}}$ to define a signal, one  obtains the posterior probability densities in Fig. \ref{fig:f1} for various $\chi$ (a uniform prior for the phase $\theta$ is taken on $[-\pi/2,\pi/2]$). At $\chi=\pi/4$, sensitivity to the MZ phase disappears because the probe state is equal to $e^{i{\pi\over 2}J_{x}}\ket{{N\over 2},{N\over 2}}$, which is the $0$ eigenvector of $J_{y}$. Because of the robustness of the Heisenberg scaling of the TF state to strong bosonic phase diffusion, consistent with a more detailed analysis of the effect of phase diffusion on the sensitivity of the local density readout for MZ interferometry with number-squeezed bosons in a double-well trap \cite{Grond_2010}, we do not further consider its  effect on the probe states.

We now analyze the QFI after loss of $K$ atoms from the TF state, with the aim of identifying an asymptotic scaling of $K$ with $N$ that allows better-than-SQL scaling to persist. Analyses of mixtures of Dicke states with uncertain total number of particles appear in Ref.\cite{Meiser_2009} and with uncertain spin projection in Ref.\cite{Apellaniz_2015}. Performance of the $J_{z}^{2}$ measurement when the TF state is well-formed, but the total number of particles of the TF state has a non-sharp distribution, was analyzed in Ref.\cite{tmw}. We call the quantum channel describing partial trace over $K$ particles by $\mathcal{E}_{K}$.
Starting with a Dicke state $\ket{N-m,m}$, the quantum state subsequent to loss of $K<N-m$ atoms can be computed from the following time-inhomogeneous Markov process (for proof, see Appendix \ref{sec:appb})
\begin{equation}
    p_{k+1}=Q_{k}p_{k} \; , \; k=0,\ldots, K-1
    \label{eqn:fff}
\end{equation}
where $p_{0}=(0,\ldots,0,1)^{\intercal}$ is a vector of length $m+1$,  $Q_{k}$ is a sequence of $(m+1)\times (m+1)$ bidiagonal, bistochastic matrix defined by nonzero elements
\begin{align}
    (Q_{k})_{i,i+1}&={i\over n_{k}} \nonumber \\
    (Q_{k})_{i,i}&=1-{(i-1)\over n_{k}},
    \label{eqn:qmat}
\end{align}
for $i=1,\ldots, m$, and $n_{k}=N-k$ is the particle number after loss of $k$ particles, $k=0,\ldots, K$.
The dynamics occurs on the space of probability distributions on the discrete set $\lbrace 0,\ldots, m\rbrace$ due to the fact that loss of an atom cannot increase the Hamming weight that defines a Dicke state.
To apply (\ref{eqn:fff}) to the TF state, we take $m=N/2$ and numerically compute $p_{K}$ and note that the final state of $N-K$ particles $\rho_{N,K}:=\mathcal{E}_{K}\left( \ket{{N\over 2},{N\over 2}}\bra{{N\over 2},{N\over 2}}\right)$  is a statistical mixture of Dicke states given by
\begin{equation}
\rho_{N,K}:= \sum_{i=1}^{N/2 +1}(p_{K})_{i}P^{{N-K\over 2}}_{i-1}
\label{eqn:rhonk}
\end{equation}
where $P^{J}_{\ell}$ is the projection to Dicke state $\ket{2J-\ell,\ell}$. 
The probe state for MZ interferometry is then obtained from Eq.(\ref{eqn:rhonk}) by taking $\rho_{N,K}(\theta):=e^{-i\theta J_{y}}\rho_{N,K}e^{i\theta J_{y}}$. It should be noted that the spin projection is invariant under the particle loss, i.e., $
\langle J_{z}\rangle_{\rho_{N,K}}=0$. The QFI $\mathcal{F}(\rho_{N,K})$  for this probe state can be computed from the spectral formula \cite{PhysRevLett.72.3439}, using the state $(1-\epsilon)\rho_{N,K}(\theta) + {\epsilon \over N-K+1}\mathbb{I}_{N-K+1}$ for an infinitesimal $\epsilon$, due to the fact that $\rho_{N,K}(\theta)$ is not generally full rank. The decrease in the QFI with respect to increasing $K$ in shown in Fig. \ref{fig:f2}. The loss of one particle decreases the QFI by a factor of 2 asymptotically, with the exact value given by $\left({N\over 2}\right)^{2}-1$, which is proven in Appendix \ref{sec:appb}. This behavior can be contrasted with the optimal pure state probe (viz., the GHZ state), which exhibits SQL scaling if even one particle is lost. In this work, we assume that the loss $K$ is known. To describe an experiment, it would be appropriate to consider a convex mixture of $\mathcal{E}_{K}$ channels to describe the probabilistic loss of particles.

Unlike the case for the noiseless TF probe in Section \ref{sec:tt}, it is not possible to saturate the QFI $\mathcal{F}(\rho_{N,K})$ for general loss values $K$ using state-agnostic readouts such as parity or $J_{z}^{2}$. For a fixed loss value $K$, the optimal readout  depends on the spectrum of the noisy state, i.e., on the  $(p_{K})_{m}$, through the symmetric logarithmic derivative \cite{holevobook}. 
\begin{figure*}[t!]
    \centering
    \includegraphics[scale=1]{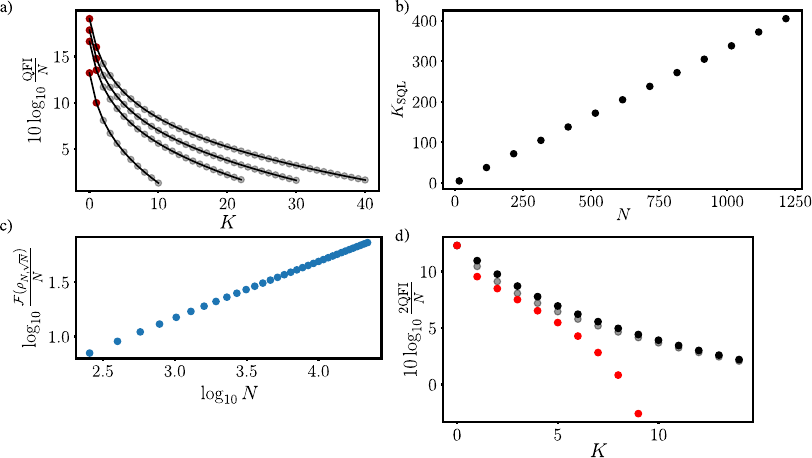}
    \caption{a) QFI after application of $\mathcal{E}_{K}$ for $K=0,\ldots,\lfloor {N\over 4}\rfloor$ with $N=40$ (bottom curve), 90, 120, 160 (top curve). $K=0$ analytical value ${N^{2}\over 2}+N$ and $K=1$ analytical value ${(N/2)^{2}-1}$ are shown as red dots. b) Maximal loss tolerance of SQL scaling of the QFI for $N=16,\ldots,1216$ in steps of $100$. c) $\mathcal{F}(\rho_{N,\sqrt{N}})$ scaling as $O(N^{1.52})$ for large $N$. d) (black dots) $(1,1)$ element of QFI matrix after loss of $K=0,1,\ldots,14$ particles from probe state (\ref{eqn:iid}) with $N=64$; (gray dots) method of moments error for local $J_{x}$, $J_{z}^{2}$,  ${1\over 2}(J_{+}^{2}+h.c.)$, and ${1\over 2}(J_{+}J_{z}+h.c.)$ readouts; (red dots) QFI after application of $\mathcal{E}_{K}$ to the TF state with $N=32$. The $K=0$ values coincide at $10\log_{10}17$.
    \label{fig:f2}}
\end{figure*}
It can be expected that the method of moments error for the parity readout saturates $\mathcal{F}(\rho_{N,K})$ near $\theta=0$, since this state is a statistical mixture of Dicke states which, if any of these is used as a probe state, allows to saturate $\mathcal{F}(m)$ as in Fig. \ref{fig:f1} a. However, applying the particle loss channel $\mathcal{E}_{K}$ results in a state with low parity signal in the interferometer. For instance, applying $\mathcal{E}_{1}$ to the TF state gives an equally weighted statistical mixture of Dicke states with different parity, and after applying the MZ interferometer for small $\theta$ the state still has equal weights in the even and odd parity sectors. Even taking into account the $J_{z}^{2}$ readout and its correlation with the parity readout via (\ref{eqn:bigmom}) to get a smaller error, saturation of the QFI does not occur for any $\theta$ domain for loss values $K>1$. In fact, by comparing the QFI $\mathcal{F}(\rho_{N,K})$ to the classical Fisher information $F_{K}(\theta)=\sum_{\ell=0}^{N/2}q_{\theta}(\ell)^{-1}\left( {dq_{\theta}(\ell)\over d\theta}\right)^{2}$ where
\begin{equation}
    q_{\theta}(\ell)=\langle N-K-\ell,\ell\vert \rho_{N,K}(\theta)\vert N-K-\ell,\ell\rangle
    \label{eqn:cfiz}
\end{equation}
is the probability of observing $\ell$ particles in the second mode, one finds that a particle number-resolving measurement does not provide enough information to construct an estimator of $\theta$ with error saturating the inverse QFI for $K>1$. Similar to Ref.\cite{Apellaniz_2015}, we find that the greatest sensitivity is obtained away from $\theta=0$, even for a combined $J_{z}^{2}$ and parity readout. Instead of a measurement in the basis of $J_{z}$ eigenvectors as in Eq. (\ref{eqn:cfiz}), one could also consider the classical Fisher information of a measurement in the $J_{y}$ phase basis Eq. (\ref{eqn:ps}). However, the classical Fisher information of this measurement is found not to saturate $\mathcal{F}(\rho_{N,K})$ for $K>0$, but rather has a periodic structure which causes the greatest sensitivity to occur at a set of equally-spaced points in $(-\pi,\pi]$. 

These results indicate that neither a measurement in the occupation number basis nor a measurement in the $J_{y}$ phase state basis produce enough information to saturate the QFI for a lossy twin Fock probe. However, we have verified numerically that there indeed exists a minimal list $\vec{O}$ of observables such that the method of moments error (left hand side of (\ref{eqn:bigmom})) saturates the QFI $\mathcal{F}(\rho_{N,K})$ for all $K$ and globally for all $\theta$. Specifically, for loss values $K=0,1$, the list 
\begin{equation}
\vec{O}=\lbrace J_{z}^{2},{1\over 2}J_{+}^{2}+h.c.\rbrace
\end{equation} is sufficient, and for $K=2,\ldots,{N\over 2}$, appending the linear spin observable $J_{x}$ and the quadratic spin observable ${1\over 2}J_{z}J_{+}+h.c.$ to the list $\vec{O}$, forming 
\begin{equation}
\vec{O}= \lbrace J_{x}, J_{z}^{2},{1\over 2}J_{+}^{2}+h.c.,{1\over 2}J_{z}J_{+}+h.c.\rbrace,
\end{equation} is sufficient. Remarkably, a phase observable with eigenvectors in Eq. (\ref{eqn:ps}), or the nonlinear parity observable $(-1)^{b^{\dagger}b}$ are not necessary for globally optimal interferometry with lossy TF states.

From the discussion in Section \ref{sec:tt}, we recall that a sequence of  corrupted (but pure) TF states $\ket{{N\over 2}+\lambda N,{N\over 2}-\lambda N}$ with $\lambda <1/2$ maintain Heisenberg scaling of the QFI. To understand the  tolerance of the lossy probe state $\rho_{N,K}$, we found for each total particle number value $N$, the greatest loss $K=K_{\text{SQL}}(N)$ such that the $\mathcal{F}(\rho_{N,K_{\text{SQL}}(N)})>N$. Such a loss value is interpreted as the maximal number of particles that may be lost while maintaining better-than-SQL value for the QFI. Fig. \ref{fig:f2}b indicates that $K_{\text{SQL}}(N)$ scales linearly with $N$, and curve fitting to an affine function suggests that 1/3 of the particles can be lost while maintaining better-than-SQL scaling. It is also of interest to identify exponents $0<\alpha<1$ and $\beta>1$ for which $\mathcal{F}(\rho_{N,N^{\alpha}}) = O(N^{\beta})$. These exponents characterize the amount of loss (as a function of the total particle number) that can be tolerated while maintaining a QFI scaling as $N^{\beta}$. For $\beta=2$, i.e., Heisenberg scaling, we were not able to identify the existence of an $\alpha$. However, the loss exponent $\alpha = 1/2$ allows QFI scaling of $N^{\beta}$ with $\beta \approx 3/2$, as shown in Fig. \ref{fig:f2}c.


\section{Gradiometry with doubled TF and states\label{sec:grad}}
Advantages of using pure, spatially-split spin squeezed atomic ensembles for magnetic gradiometry (estimation of the difference $\theta_{1}-\theta_{2}$) beyond SQL were outlined in Ref.\cite{Fadel_2023}. Generally, when employing atom interferometers for distributed sensing (i.e., sensing a field at many different spatial points), there are two classes of quantum sensing strategies that can be employed, which are straightforward generalizations of their single-parameter sensing counterparts \cite{PhysRevLett.96.010401}. The \textit{parallel strategies} utilize identical copies of a probe state or a single global entangled state as a probe which addresses the various points of interest. The probe state is locally parametrized by phases corresponding to the local field values, with the parametrization usually modeled by a tensor product of parametrized quantum channels. By contrast, the \textit{sequential strategies} expose an initial probe state (which may be entangled with an ancilla register) to the points of interest in temporal succession. The preference of strategy depends on the cost of probe state preparation, the dwell time of the transient that must be sensed, etc. It has been shown that in the parallel generalization of the MZ interferometry protocol which utilizes general linear optical unitaries in lieu of the $e^{\pm i{\pi\over 2}J_{x}}$ two-mode beamsplitters, twin Fock states asymptotically allow to saturate the optimal performance (over all mode-separable input states of fixed total  particle number) for estimation of a linear function of phases \cite{PhysRevLett.121.043604}.  Often in distributed sensing, only a single linear function $\vec{w}\cdot \vec{\theta}$ is of the parameters is of interest \cite{PhysRevLett.120.080501}, with the normalization $\Vert \vec{w}\Vert_{2}=1$ sometimes chosen so that the largest possible QFI for estimating a linear function  coincides with the largest eigenvalue of the QFI matrix.

One possible sequential strategy to estimate $(\theta_{1}-\theta_{2})/\sqrt{2}$ using a TF or Dicke state probe is to $\pi$ shift the phase of the second MZ interferometer addressing the $\theta_{2}$ parameter. This results in a full protocol described by the operation
\begin{equation}
    e^{-i{\pi\over 2}J_{x}}e^{-i\theta_{2}J_{z}}e^{i{\pi\over 2}J_{x}}e^{i{\pi\over 2}J_{x}}e^{-i\theta_{1}J_{z}}e^{-i{\pi\over 2}J_{x}} = e^{-i(\theta_{1}-\theta_{2})J_{y}}
    \label{eqn:jhjh}
\end{equation}
However, note that for any global rotation $e^{-i\varphi \vec{n}\cdot \vec{J}}$, with $\vec{n}\in \mathbb{R}^{3}$ a unit vector and $\vec{J}=(J_{x},J_{y},J_{z})$ the vector of spin operators, the following channel commutativity relation holds on the quantum states of $N$ two-mode bosons
\begin{equation}
    \mathcal{E}_{K}\circ e^{-i\varphi \vec{n}\cdot \vec{J}} = e^{-i\varphi \vec{n}\cdot \vec{J}} \circ \mathcal{E}_{K}
\end{equation}
where the spin operators act in a spin-$N/2$ representation on the left hand side, and a spin-${N-K\over 2}$ representation on the right hand side. Therefore, an analysis of this sequential protocol (\ref{eqn:jhjh}) can be carried out (even with imperfect probes or particle loss) using the methods of Sections \ref{sec:tt} and \ref{sec:lossytf}. 

Instead, we consider two parallel strategies in which globally entangled states are used to estimate linear functions of the phases $\theta_{j}$, $j=1,2$. The first parallel strategy is especially relevant when the initial TF state is a single atomic cloud with $N/2$ occupation in each of two internal modes, e.g., atomic nuclear spin states. The initial TF cloud is then spatially split (we assume perfect splitting) to give the TF probe state in the superposition of spatial modes
\begin{equation}
    \ket{\psi_{\text{split}}}:=\left( {a_{1}^{\dagger}+a_{2}^{\dagger}\over \sqrt{2}}\right)^{N/2}\left( {b_{1}^{\dagger}+b_{2}^{\dagger}\over \sqrt{2}}\right)^{N/2}\ket{0,0,0,0}.
    \label{eqn:split}
\end{equation}
Note that the original internal modes $a$ and $b$ of the TF state have been spatially split to spatial modes $a_{1}$, $a_{2}$ and $b_{1}$, $b_{2}$, respectively. The local phase shifts are imprinted on $\ket{\psi_{\text{split}}}$ by the operation $U_{\text{MZ}}(\theta)=e^{-{\theta_{1}\over 2}(a_{1}^{\dagger}b_{1}-h.c.)-{\theta_{2}\over 2}(a_{2}^{\dagger}b_{2}-h.c.)}$. This strategy appears especially relevant for magnetic gradiometry with atomic ensembles. However, we find that every entry of the inverse of the QFI matrix scales as $N^{-1}$. The multiparameter quantum Cram\'{e}r-Rao bound then implies that an estimator of any linear function of $\theta_{1}$ and $\theta_{2}$ has an error scaling at least as $N^{-1}$, i.e., scaling as SQL. No improvement can be obtained by losing particles, so we cease further exploration of probe state $\ket{\psi_{\text{split}}}$.

The second parallel strategy uses the same local phase shift operation $U_{\text{MZ}}(\theta)$, but utilizes as probe the doubled TF state
\begin{equation}
\ket{\psi_{\text{doubled}}}:= a_{1}^{\dagger {N\over 4}}b_{1}^{\dagger {N\over 4}}a_{2}^{\dagger {N\over 4}}b_{2}^{\dagger {N\over 4}}\ket{0,0,0,0}
\label{eqn:iid}
\end{equation}
in which two spin modes and two spatial modes are occupied symmetrically. The state (\ref{eqn:iid}) can also be considered as two copies of \textit{bosonic independent and identically distributed} (b.i.i.d.) TF states \cite{volkbiid} which, mathematically, corresponds to taking the Young product of two copies of the symmetric subspace of $(\mathbb{C}^{2})^{\otimes N}$. Note that both (\ref{eqn:split}) and (\ref{eqn:iid}) are in the symmetric subspace of $(\mathbb{C}^{4})^{\otimes N}$ and are, therefore, valid bosonic states of $N$ atoms. For independent MZ interferometers acting on the $a_{j},b_{j}$ mode pairs, the $2\times 2$ QFI matrix for (\ref{eqn:iid}) is diagonal with equal $(1,1)$ and $(2,2)$ matrix elements given by ${N^{2}+4N\over 8}$, independent of $\vec{\theta}$ (the probe state (\ref{eqn:split}) does not have a diagonal QFI matrix). Therefore, an optimal estimator of any linear function $\vec{w}\cdot \vec{\theta}$, $\Vert \vec{w}\Vert_{2}=1$ exhibits Heisenberg scaling, and we will therefore quantify the performance of a probe state for gradiometry by the $(1,1)$ entry of QFI matrix. Using (\ref{eqn:bigmom}), one can show analytically that, e.g., the $(j,j)$ element of the QFI matrix is globally saturated by method of moments estimation of the local quadratic spin observables $O_{1}=J_{z}^{(a_{j},b_{j}) 2}$, $O_{2}={1\over 2}(J_{+}^{(a_{j},b_{j})2}+h.c.)$. The covariance matrix and derivatives are shown in Appendix \ref{sec:appc}, along with  verification that the final expression indeed simplifies to ${N\over 8}(N+4)$ regardless of $\theta_{1}$. 

The property of the QFI matrix being a scalar multiple of the identity holds also for the probe state obtained by losing $K$ particles from (\ref{eqn:iid}). This fact follows because taking the trace over $K$ particles produces a statistical mixture of four-mode Dicke states and the $1\leftrightarrow 2$ label symmetry of the state is preserved.  Note that atom loss from (\ref{eqn:iid}) is not a strictly local process to the MZ interferometers and cannot be desribed by, e.g., applying the loss channels $\mathcal{E}_{K}$ to the mode pairs $a_{1},b_{1}$ and $a_{2},b_{2}$ independently. Further, although the dimension of the symmetric subspace of $(\mathbb{C}^{4})^{\otimes N}$ scales as $N^{3}$, if one considers losing at least $K$ atoms from the doubled TF state with $K\ge {N\over 4}$, one must keep track of at least $e^{O(\sqrt{N})}$ weights for an exact description of the state, according to the Hardy-Ramanujan asymptotic for the number of partitions of a large natural number. We briefly describe the channel describing particle loss from a Dicke state with more than two modes in Appendix \ref{sec:appb}.

The black dots of Fig. \ref{fig:f2}d) show the gain over the SQL ${N\over 2}$ for estimation of the first MZ phase $\theta_{1}$ (with $N=64$). The SQL is taken as ${N\over 2}$ because at most $N/2$ particles address each of the MZ interferometers when loss is applied to probe state (\ref{eqn:iid}). The red dots show the gain over SQL for the probe state $\mathcal{E}_{K}\left( \ket{{N\over 4},{N\over 4}}\bra{{N\over 4},{N\over 4}}   \right)$ using the same analysis as in Section \ref{sec:lossytf}. It is clear that for small loss values, the four mode state exhibits roughly 1.4 dB gain in the attainable precision of local interferometry compared to a single lossy TF state with the same local energy and loss. The result indicates a different phenomenon from the known distributed sensing results in the multimode optical setting with pure probe states \cite{PhysRevLett.121.043604} or massive boson setting with pure probe states \cite{PhysRevLett.121.130503}. In the absence of loss, the optimal strategy for estimation of a linear function of the MZ phases $\theta_{1}$, $\theta_{2}$ using the probe state (\ref{eqn:iid}) does not outperform the use of two independent TF states $\ket{{N\over 4},{N\over 4}}$ to separately probe the MZ interferometers, which indeed indicates that the probe state (\ref{eqn:iid}) is a suboptimal four mode, $N$ particle probe state (one could apply a linear optical operation to, e.g., $\ket{{N\over 2},{N\over 2},0,0}$ to obtain a better probe state) \cite{PhysRevLett.121.043604}. However, when at least one particle is lost, the distributed entanglement of the lossy version of (\ref{eqn:iid}) allows greater attainable precision than independent copies of the lossy TF probe state.

\section{Discussion\label{sec:disc}}
We analyzed method of moments readouts for the TF state and its images under particle loss channel or unitary phase-diffusion channel. The traditional parity measurement, which saturates the QFI for MZ interferometry near $\theta=0$ for all Dicke states, can be dispensed with if one can measure both quadratic spin observables $J_{z}^{2}$ and ${1\over 2}(J_{+}^{2}+h.c.)$. Although one-shot measurements of both these observables is impossible because they don't commute, allocation of shots to one or the other observable allows to extract an estimator with error asymptotically given by the left hand side (\ref{eqn:bigmom}), according to the central limit theorem. When particle loss is taken into account, we identified minimal lists of observables that give method of moments error that globally saturates the QFI for any loss value. The observables are again at-most-quadratic 
in the spin operators. The fact that better than SQL scaling is obtained even when $O(\sqrt{N})$ particles are lost from the TF state indicates the loss robustness of the TF probe state for practical MZ interferometry.
To analyze a formal gradiometry protocol, we considered application of particle loss to the doubled twin Fock state $\ket{\psi_{\text{doubled}}}=\ket{{N\over 4},{N\over 4},{N\over 4},{N\over 4}}$, which can probe two MZ interferometers describing, e.g., distributed sensing of spatially separated external field values. Although the noiseless doubled TF state does not allow to estimate $\theta_{1}$ (chosen without loss of generality) with lower error than is achievable by probing the MZ interferometer with a TF states of $N/2$ particles, the lossy doubled TF state exhibits an advantage over the lossy TF state. This result suggests that an extended bosonic insulating state is, in a practical lossy setting, a more useful resource for distributed quantum sensing than a tensor product of states of fixed particle number. The advantage can be attributed to the indistinguishability of the particles in the bosonic insulator, giving a fully symmetrized state which causes the loss to be distributed over all occupied modes. Extending this distributed sensing result to include other noise sources e.g., thermal \cite{PhysRevA.98.032325}, would establish this noisy advantage in experimentally realistic settings such as those realized in recent demonstrations of entanglement between spatially separated atom ensembles suggests  \cite{tothapell}.

\acknowledgements
The authors thank Katarzyna Krzyzanowska, Sivaprasad Omanakuttan, and Jonathan Gross for helpful discussions and acknowledge support from the Laboratory Directed Research and Development (LDRD) Program at Los Alamos National Laboratory (LANL).
 Los Alamos National Laboratory is managed by Triad National Security, LLC, for the National Nuclear Security Administration of the U.S. Department of Energy under Contract No. 89233218CNA000001.
\onecolumngrid
\bibliography{test.bib}

\begin{thebibliography}{61}%
\makeatletter
\providecommand \@ifxundefined [1]{%
 \@ifx{#1\undefined}
}%
\providecommand \@ifnum [1]{%
 \ifnum #1\expandafter \@firstoftwo
 \else \expandafter \@secondoftwo
 \fi
}%
\providecommand \@ifx [1]{%
 \ifx #1\expandafter \@firstoftwo
 \else \expandafter \@secondoftwo
 \fi
}%
\providecommand \natexlab [1]{#1}%
\providecommand \enquote  [1]{``#1''}%
\providecommand \bibnamefont  [1]{#1}%
\providecommand \bibfnamefont [1]{#1}%
\providecommand \citenamefont [1]{#1}%
\providecommand \href@noop [0]{\@secondoftwo}%
\providecommand \href [0]{\begingroup \@sanitize@url \@href}%
\providecommand \@href[1]{\@@startlink{#1}\@@href}%
\providecommand \@@href[1]{\endgroup#1\@@endlink}%
\providecommand \@sanitize@url [0]{\catcode `\\12\catcode `\$12\catcode
  `\&12\catcode `\#12\catcode `\^12\catcode `\_12\catcode `\%12\relax}%
\providecommand \@@startlink[1]{}%
\providecommand \@@endlink[0]{}%
\providecommand \url  [0]{\begingroup\@sanitize@url \@url }%
\providecommand \@url [1]{\endgroup\@href {#1}{\urlprefix }}%
\providecommand \urlprefix  [0]{URL }%
\providecommand \Eprint [0]{\href }%
\providecommand \doibase [0]{http://dx.doi.org/}%
\providecommand \selectlanguage [0]{\@gobble}%
\providecommand \bibinfo  [0]{\@secondoftwo}%
\providecommand \bibfield  [0]{\@secondoftwo}%
\providecommand \translation [1]{[#1]}%
\providecommand \BibitemOpen [0]{}%
\providecommand \bibitemStop [0]{}%
\providecommand \bibitemNoStop [0]{.\EOS\space}%
\providecommand \EOS [0]{\spacefactor3000\relax}%
\providecommand \BibitemShut  [1]{\csname bibitem#1\endcsname}%
\let\auto@bib@innerbib\@empty
\bibitem [{\citenamefont {Holland}\ and\ \citenamefont
  {Burnett}(1993)}]{PhysRevLett.71.1355}%
  \BibitemOpen
  \bibfield  {author} {\bibinfo {author} {\bibfnamefont {M.~J.}\ \bibnamefont
  {Holland}}\ and\ \bibinfo {author} {\bibfnamefont {K.}~\bibnamefont
  {Burnett}},\ }\bibfield  {title} {\enquote {\bibinfo {title}
  {{Interferometric detection of optical phase shifts at the Heisenberg
  limit}},}\ }\href {\doibase 10.1103/PhysRevLett.71.1355} {\bibfield
  {journal} {\bibinfo  {journal} {Phys. Rev. Lett.}\ }\textbf {\bibinfo
  {volume} {71}},\ \bibinfo {pages} {1355--1358} (\bibinfo {year}
  {1993})}\BibitemShut {NoStop}%
\bibitem [{\citenamefont {Caves}(1981)}]{PhysRevD.23.1693}%
  \BibitemOpen
  \bibfield  {author} {\bibinfo {author} {\bibfnamefont {Carlton~M.}\
  \bibnamefont {Caves}},\ }\bibfield  {title} {\enquote {\bibinfo {title}
  {Quantum-mechanical noise in an interferometer},}\ }\href {\doibase
  10.1103/PhysRevD.23.1693} {\bibfield  {journal} {\bibinfo  {journal} {Phys.
  Rev. D}\ }\textbf {\bibinfo {volume} {23}},\ \bibinfo {pages} {1693--1708}
  (\bibinfo {year} {1981})}\BibitemShut {NoStop}%
\bibitem [{\citenamefont {Ganapathy}\ \emph {et~al.}(2023)\citenamefont
  {Ganapathy} \emph {et~al.}}]{PhysRevX.13.041021}%
  \BibitemOpen
  \bibfield  {author} {\bibinfo {author} {\bibfnamefont {D.}~\bibnamefont
  {Ganapathy}} \emph {et~al.} (\bibinfo {collaboration} {LIGO O4 Detector
  Collaboration}),\ }\bibfield  {title} {\enquote {\bibinfo {title} {{Broadband
  Quantum Enhancement of the LIGO Detectors with Frequency-Dependent
  Squeezing}},}\ }\href {\doibase 10.1103/PhysRevX.13.041021} {\bibfield
  {journal} {\bibinfo  {journal} {Phys. Rev. X}\ }\textbf {\bibinfo {volume}
  {13}},\ \bibinfo {pages} {041021} (\bibinfo {year} {2023})}\BibitemShut
  {NoStop}%
\bibitem [{\citenamefont {McCuller}\ \emph {et~al.}(2020)\citenamefont
  {McCuller}, \citenamefont {Whittle}, \citenamefont {Ganapathy}, \citenamefont
  {Komori}, \citenamefont {Tse}, \citenamefont {Fernandez-Galiana},
  \citenamefont {Barsotti}, \citenamefont {Fritschel}, \citenamefont
  {MacInnis}, \citenamefont {Matichard}, \citenamefont {Mason}, \citenamefont
  {Mavalvala}, \citenamefont {Mittleman}, \citenamefont {Yu}, \citenamefont
  {Zucker},\ and\ \citenamefont {Evans}}]{PhysRevLett.124.171102}%
  \BibitemOpen
  \bibfield  {author} {\bibinfo {author} {\bibfnamefont {L.}~\bibnamefont
  {McCuller}}, \bibinfo {author} {\bibfnamefont {C.}~\bibnamefont {Whittle}},
  \bibinfo {author} {\bibfnamefont {D.}~\bibnamefont {Ganapathy}}, \bibinfo
  {author} {\bibfnamefont {K.}~\bibnamefont {Komori}}, \bibinfo {author}
  {\bibfnamefont {M.}~\bibnamefont {Tse}}, \bibinfo {author} {\bibfnamefont
  {A.}~\bibnamefont {Fernandez-Galiana}}, \bibinfo {author} {\bibfnamefont
  {L.}~\bibnamefont {Barsotti}}, \bibinfo {author} {\bibfnamefont
  {P.}~\bibnamefont {Fritschel}}, \bibinfo {author} {\bibfnamefont
  {M.}~\bibnamefont {MacInnis}}, \bibinfo {author} {\bibfnamefont
  {F.}~\bibnamefont {Matichard}}, \bibinfo {author} {\bibfnamefont
  {K.}~\bibnamefont {Mason}}, \bibinfo {author} {\bibfnamefont
  {N.}~\bibnamefont {Mavalvala}}, \bibinfo {author} {\bibfnamefont
  {R.}~\bibnamefont {Mittleman}}, \bibinfo {author} {\bibfnamefont {Haocun}\
  \bibnamefont {Yu}}, \bibinfo {author} {\bibfnamefont {M.~E.}\ \bibnamefont
  {Zucker}}, \ and\ \bibinfo {author} {\bibfnamefont {M.}~\bibnamefont
  {Evans}},\ }\bibfield  {title} {\enquote {\bibinfo {title}
  {{Frequency-Dependent Squeezing for Advanced LIGO}},}\ }\href {\doibase
  10.1103/PhysRevLett.124.171102} {\bibfield  {journal} {\bibinfo  {journal}
  {Phys. Rev. Lett.}\ }\textbf {\bibinfo {volume} {124}},\ \bibinfo {pages}
  {171102} (\bibinfo {year} {2020})}\BibitemShut {NoStop}%
\bibitem [{\citenamefont {Cooper}\ \emph {et~al.}(2013)\citenamefont {Cooper},
  \citenamefont {Wright}, \citenamefont {S\"{o}ller},\ and\ \citenamefont
  {Smith}}]{opticalfock}%
  \BibitemOpen
  \bibfield  {author} {\bibinfo {author} {\bibfnamefont {Merlin}\ \bibnamefont
  {Cooper}}, \bibinfo {author} {\bibfnamefont {Laura~J.}\ \bibnamefont
  {Wright}}, \bibinfo {author} {\bibfnamefont {Christoph}\ \bibnamefont
  {S\"{o}ller}}, \ and\ \bibinfo {author} {\bibfnamefont {Brian~J.}\
  \bibnamefont {Smith}},\ }\bibfield  {title} {\enquote {\bibinfo {title}
  {{Experimental generation of multi-photon Fock states}},}\ }\href {\doibase
  10.1364/OE.21.005309} {\bibfield  {journal} {\bibinfo  {journal} {Opt.
  Express}\ }\textbf {\bibinfo {volume} {21}},\ \bibinfo {pages} {5309--5317}
  (\bibinfo {year} {2013})}\BibitemShut {NoStop}%
\bibitem [{\citenamefont {Dimopoulos}\ \emph {et~al.}(2008)\citenamefont
  {Dimopoulos}, \citenamefont {Graham}, \citenamefont {Hogan}, \citenamefont
  {Kasevich},\ and\ \citenamefont {Rajendran}}]{PhysRevD.78.122002}%
  \BibitemOpen
  \bibfield  {author} {\bibinfo {author} {\bibfnamefont {Savas}\ \bibnamefont
  {Dimopoulos}}, \bibinfo {author} {\bibfnamefont {Peter~W.}\ \bibnamefont
  {Graham}}, \bibinfo {author} {\bibfnamefont {Jason~M.}\ \bibnamefont
  {Hogan}}, \bibinfo {author} {\bibfnamefont {Mark~A.}\ \bibnamefont
  {Kasevich}}, \ and\ \bibinfo {author} {\bibfnamefont {Surjeet}\ \bibnamefont
  {Rajendran}},\ }\bibfield  {title} {\enquote {\bibinfo {title} {Atomic
  gravitational wave interferometric sensor},}\ }\href {\doibase
  10.1103/PhysRevD.78.122002} {\bibfield  {journal} {\bibinfo  {journal} {Phys.
  Rev. D}\ }\textbf {\bibinfo {volume} {78}},\ \bibinfo {pages} {122002}
  (\bibinfo {year} {2008})}\BibitemShut {NoStop}%
\bibitem [{\citenamefont {Abe}\ \emph {et~al.}(2021)\citenamefont {Abe} \emph
  {et~al.}}]{Abe_2021}%
  \BibitemOpen
  \bibfield  {author} {\bibinfo {author} {\bibfnamefont {Mahiro}\ \bibnamefont
  {Abe}} \emph {et~al.},\ }\bibfield  {title} {\enquote {\bibinfo {title}
  {{Matter-wave Atomic Gradiometer Interferometric Sensor (MAGIS-100)}},}\
  }\href {\doibase 10.1088/2058-9565/abf719} {\bibfield  {journal} {\bibinfo
  {journal} {Quantum Science and Technology}\ }\textbf {\bibinfo {volume}
  {6}},\ \bibinfo {pages} {044003} (\bibinfo {year} {2021})}\BibitemShut
  {NoStop}%
\bibitem [{\citenamefont {Kasevich}\ and\ \citenamefont
  {Chu}(1991)}]{PhysRevLett.67.181}%
  \BibitemOpen
  \bibfield  {author} {\bibinfo {author} {\bibfnamefont {Mark}\ \bibnamefont
  {Kasevich}}\ and\ \bibinfo {author} {\bibfnamefont {Steven}\ \bibnamefont
  {Chu}},\ }\bibfield  {title} {\enquote {\bibinfo {title} {Atomic
  interferometry using stimulated raman transitions},}\ }\href {\doibase
  10.1103/PhysRevLett.67.181} {\bibfield  {journal} {\bibinfo  {journal} {Phys.
  Rev. Lett.}\ }\textbf {\bibinfo {volume} {67}},\ \bibinfo {pages} {181--184}
  (\bibinfo {year} {1991})}\BibitemShut {NoStop}%
\bibitem [{\citenamefont {Wu}\ \emph {et~al.}(2005)\citenamefont {Wu},
  \citenamefont {Wang}, \citenamefont {Diot},\ and\ \citenamefont
  {Prentiss}}]{PhysRevA.71.043602}%
  \BibitemOpen
  \bibfield  {author} {\bibinfo {author} {\bibfnamefont {Saijun}\ \bibnamefont
  {Wu}}, \bibinfo {author} {\bibfnamefont {Ying-Ju}\ \bibnamefont {Wang}},
  \bibinfo {author} {\bibfnamefont {Quentin}\ \bibnamefont {Diot}}, \ and\
  \bibinfo {author} {\bibfnamefont {Mara}\ \bibnamefont {Prentiss}},\
  }\bibfield  {title} {\enquote {\bibinfo {title} {Splitting matter waves using
  an optimized standing-wave light-pulse sequence},}\ }\href {\doibase
  10.1103/PhysRevA.71.043602} {\bibfield  {journal} {\bibinfo  {journal} {Phys.
  Rev. A}\ }\textbf {\bibinfo {volume} {71}},\ \bibinfo {pages} {043602}
  (\bibinfo {year} {2005})}\BibitemShut {NoStop}%
\bibitem [{\citenamefont {Wiseman}\ and\ \citenamefont
  {Vaccaro}(2003)}]{PhysRevLett.91.097902}%
  \BibitemOpen
  \bibfield  {author} {\bibinfo {author} {\bibfnamefont {H.~M.}\ \bibnamefont
  {Wiseman}}\ and\ \bibinfo {author} {\bibfnamefont {John~A.}\ \bibnamefont
  {Vaccaro}},\ }\bibfield  {title} {\enquote {\bibinfo {title} {Entanglement of
  indistinguishable particles shared between two parties},}\ }\href {\doibase
  10.1103/PhysRevLett.91.097902} {\bibfield  {journal} {\bibinfo  {journal}
  {Phys. Rev. Lett.}\ }\textbf {\bibinfo {volume} {91}},\ \bibinfo {pages}
  {097902} (\bibinfo {year} {2003})}\BibitemShut {NoStop}%
\bibitem [{\citenamefont {Lücke}\ \emph {et~al.}(2011)\citenamefont {Lücke},
  \citenamefont {Scherer}, \citenamefont {Kruse}, \citenamefont {Pezzé},
  \citenamefont {Deuretzbacher}, \citenamefont {Hyllus}, \citenamefont {Topic},
  \citenamefont {Peise}, \citenamefont {Ertmer}, \citenamefont {Arlt},
  \citenamefont {Santos}, \citenamefont {Smerzi},\ and\ \citenamefont
  {Klempt}}]{tmw}%
  \BibitemOpen
  \bibfield  {author} {\bibinfo {author} {\bibfnamefont {B.}~\bibnamefont
  {Lücke}}, \bibinfo {author} {\bibfnamefont {M.}~\bibnamefont {Scherer}},
  \bibinfo {author} {\bibfnamefont {J.}~\bibnamefont {Kruse}}, \bibinfo
  {author} {\bibfnamefont {L.}~\bibnamefont {Pezzé}}, \bibinfo {author}
  {\bibfnamefont {F.}~\bibnamefont {Deuretzbacher}}, \bibinfo {author}
  {\bibfnamefont {P.}~\bibnamefont {Hyllus}}, \bibinfo {author} {\bibfnamefont
  {O.}~\bibnamefont {Topic}}, \bibinfo {author} {\bibfnamefont
  {J.}~\bibnamefont {Peise}}, \bibinfo {author} {\bibfnamefont
  {W.}~\bibnamefont {Ertmer}}, \bibinfo {author} {\bibfnamefont
  {J.}~\bibnamefont {Arlt}}, \bibinfo {author} {\bibfnamefont {L.}~\bibnamefont
  {Santos}}, \bibinfo {author} {\bibfnamefont {A.}~\bibnamefont {Smerzi}}, \
  and\ \bibinfo {author} {\bibfnamefont {C.}~\bibnamefont {Klempt}},\
  }\bibfield  {title} {\enquote {\bibinfo {title} {Twin matter waves for
  interferometry beyond the classical limit},}\ }\href {\doibase
  10.1126/science.1208798} {\bibfield  {journal} {\bibinfo  {journal}
  {Science}\ }\textbf {\bibinfo {volume} {334}},\ \bibinfo {pages} {773--776}
  (\bibinfo {year} {2011})}\BibitemShut {NoStop}%
\bibitem [{\citenamefont {Yurke}\ \emph {et~al.}(1986)\citenamefont {Yurke},
  \citenamefont {McCall},\ and\ \citenamefont {Klauder}}]{PhysRevA.33.4033}%
  \BibitemOpen
  \bibfield  {author} {\bibinfo {author} {\bibfnamefont {Bernard}\ \bibnamefont
  {Yurke}}, \bibinfo {author} {\bibfnamefont {Samuel~L.}\ \bibnamefont
  {McCall}}, \ and\ \bibinfo {author} {\bibfnamefont {John~R.}\ \bibnamefont
  {Klauder}},\ }\bibfield  {title} {\enquote {\bibinfo {title} {{SU(2) and
  SU(1,1) interferometers}},}\ }\href {\doibase 10.1103/PhysRevA.33.4033}
  {\bibfield  {journal} {\bibinfo  {journal} {Phys. Rev. A}\ }\textbf {\bibinfo
  {volume} {33}},\ \bibinfo {pages} {4033--4054} (\bibinfo {year}
  {1986})}\BibitemShut {NoStop}%
\bibitem [{\citenamefont {Holevo}(1982)}]{holevobook}%
  \BibitemOpen
  \bibfield  {author} {\bibinfo {author} {\bibfnamefont {A.}~\bibnamefont
  {Holevo}},\ }\href@noop {} {\emph {\bibinfo {title} {Probabilistic and
  Statistical Aspects of Quantum Theory}}}\ (\bibinfo  {publisher}
  {North-Holland, Amsterdam},\ \bibinfo {year} {1982})\BibitemShut {NoStop}%
\bibitem [{\citenamefont {Yu}(2013)}]{yu}%
  \BibitemOpen
  \bibfield  {author} {\bibinfo {author} {\bibfnamefont {Sixia}\ \bibnamefont
  {Yu}},\ }\bibfield  {title} {\enquote {\bibinfo {title} {{Quantum Fisher
  Information as the Convex Roof of Variance}},}\ }\href@noop {} {\bibfield
  {journal} {\bibinfo  {journal} {arXiv preprint arXiv:1302.5311}\ } (\bibinfo
  {year} {2013})}\BibitemShut {NoStop}%
\bibitem [{\citenamefont {Lang}\ and\ \citenamefont
  {Caves}(2014)}]{PhysRevA.90.025802}%
  \BibitemOpen
  \bibfield  {author} {\bibinfo {author} {\bibfnamefont {Matthias~D.}\
  \bibnamefont {Lang}}\ and\ \bibinfo {author} {\bibfnamefont {Carlton~M.}\
  \bibnamefont {Caves}},\ }\bibfield  {title} {\enquote {\bibinfo {title}
  {Optimal quantum-enhanced interferometry},}\ }\href {\doibase
  10.1103/PhysRevA.90.025802} {\bibfield  {journal} {\bibinfo  {journal} {Phys.
  Rev. A}\ }\textbf {\bibinfo {volume} {90}},\ \bibinfo {pages} {025802}
  (\bibinfo {year} {2014})}\BibitemShut {NoStop}%
\bibitem [{\citenamefont {Duivenvoorden}\ \emph {et~al.}(2017)\citenamefont
  {Duivenvoorden}, \citenamefont {Terhal},\ and\ \citenamefont
  {Weigand}}]{PhysRevA.95.012305}%
  \BibitemOpen
  \bibfield  {author} {\bibinfo {author} {\bibfnamefont {Kasper}\ \bibnamefont
  {Duivenvoorden}}, \bibinfo {author} {\bibfnamefont {Barbara~M.}\ \bibnamefont
  {Terhal}}, \ and\ \bibinfo {author} {\bibfnamefont {Daniel}\ \bibnamefont
  {Weigand}},\ }\bibfield  {title} {\enquote {\bibinfo {title} {Single-mode
  displacement sensor},}\ }\href {\doibase 10.1103/PhysRevA.95.012305}
  {\bibfield  {journal} {\bibinfo  {journal} {Phys. Rev. A}\ }\textbf {\bibinfo
  {volume} {95}},\ \bibinfo {pages} {012305} (\bibinfo {year}
  {2017})}\BibitemShut {NoStop}%
\bibitem [{\citenamefont {Vaneph}\ \emph {et~al.}(2013)\citenamefont {Vaneph},
  \citenamefont {Tufarelli},\ and\ \citenamefont {Genoni}}]{genoni}%
  \BibitemOpen
  \bibfield  {author} {\bibinfo {author} {\bibfnamefont {Cyril}\ \bibnamefont
  {Vaneph}}, \bibinfo {author} {\bibfnamefont {Tommaso}\ \bibnamefont
  {Tufarelli}}, \ and\ \bibinfo {author} {\bibfnamefont {Marco~G.}\
  \bibnamefont {Genoni}},\ }\bibfield  {title} {\enquote {\bibinfo {title}
  {Quantum estimation of a two-phase spin rotation},}\ }\href {\doibase
  doi:10.2478/qmetro-2013-0003} {\bibfield  {journal} {\bibinfo  {journal}
  {Quantum Measurements and Quantum Metrology}\ }\textbf {\bibinfo {volume}
  {1}},\ \bibinfo {pages} {12--20} (\bibinfo {year} {2013})}\BibitemShut
  {NoStop}%
\bibitem [{\citenamefont {Fujiwara}(2001)}]{fuji}%
  \BibitemOpen
  \bibfield  {author} {\bibinfo {author} {\bibfnamefont {Akio}\ \bibnamefont
  {Fujiwara}},\ }\bibfield  {title} {\enquote {\bibinfo {title} {{Estimation of
  SU(2) operation and dense coding: An information geometric approach}},}\
  }\href {\doibase 10.1103/PhysRevA.65.012316} {\bibfield  {journal} {\bibinfo
  {journal} {Phys. Rev. A}\ }\textbf {\bibinfo {volume} {65}},\ \bibinfo
  {pages} {012316} (\bibinfo {year} {2001})}\BibitemShut {NoStop}%
\bibitem [{\citenamefont {Gessner}\ \emph {et~al.}(2020)\citenamefont
  {Gessner}, \citenamefont {Smerzi},\ and\ \citenamefont
  {Pezz{\`e}}}]{Gessner2020}%
  \BibitemOpen
  \bibfield  {author} {\bibinfo {author} {\bibfnamefont {Manuel}\ \bibnamefont
  {Gessner}}, \bibinfo {author} {\bibfnamefont {Augusto}\ \bibnamefont
  {Smerzi}}, \ and\ \bibinfo {author} {\bibfnamefont {Luca}\ \bibnamefont
  {Pezz{\`e}}},\ }\bibfield  {title} {\enquote {\bibinfo {title}
  {Multiparameter squeezing for optimal quantum enhancements in sensor
  networks},}\ }\href {\doibase 10.1038/s41467-020-17471-3} {\bibfield
  {journal} {\bibinfo  {journal} {Nature Communications}\ }\textbf {\bibinfo
  {volume} {11}},\ \bibinfo {pages} {3817} (\bibinfo {year}
  {2020})}\BibitemShut {NoStop}%
\bibitem [{\citenamefont {Fadel}\ \emph {et~al.}(2023)\citenamefont {Fadel},
  \citenamefont {Yadin}, \citenamefont {Mao}, \citenamefont {Byrnes},\ and\
  \citenamefont {Gessner}}]{Fadel_2023}%
  \BibitemOpen
  \bibfield  {author} {\bibinfo {author} {\bibfnamefont {Matteo}\ \bibnamefont
  {Fadel}}, \bibinfo {author} {\bibfnamefont {Benjamin}\ \bibnamefont {Yadin}},
  \bibinfo {author} {\bibfnamefont {Yuping}\ \bibnamefont {Mao}}, \bibinfo
  {author} {\bibfnamefont {Tim}\ \bibnamefont {Byrnes}}, \ and\ \bibinfo
  {author} {\bibfnamefont {Manuel}\ \bibnamefont {Gessner}},\ }\bibfield
  {title} {\enquote {\bibinfo {title} {Multiparameter quantum metrology and
  mode entanglement with spatially split nonclassical spin ensembles},}\ }\href
  {\doibase 10.1088/1367-2630/ace1a0} {\bibfield  {journal} {\bibinfo
  {journal} {New Journal of Physics}\ }\textbf {\bibinfo {volume} {25}},\
  \bibinfo {pages} {073006} (\bibinfo {year} {2023})}\BibitemShut {NoStop}%
\bibitem [{\citenamefont {B\"{a}rtschi}\ and\ \citenamefont
  {Eidenbenz}(2022)}]{bart}%
  \BibitemOpen
  \bibfield  {author} {\bibinfo {author} {\bibfnamefont {Andreas}\ \bibnamefont
  {B\"{a}rtschi}}\ and\ \bibinfo {author} {\bibfnamefont {Stephan}\
  \bibnamefont {Eidenbenz}},\ }\bibfield  {title} {\enquote {\bibinfo {title}
  {{Short-Depth Circuits for Dicke State Preparation}},}\ }in\ \href {\doibase
  10.1109/QCE53715.2022.00027} {\emph {\bibinfo {booktitle} {2022 IEEE
  International Conference on Quantum Computing and Engineering (QCE)}}}\
  (\bibinfo {year} {2022})\ pp.\ \bibinfo {pages} {87--96}\BibitemShut
  {NoStop}%
\bibitem [{\citenamefont {Kajtoch}\ and\ \citenamefont
  {Witkowska}(2015)}]{PhysRevA.92.013623}%
  \BibitemOpen
  \bibfield  {author} {\bibinfo {author} {\bibfnamefont {Dariusz}\ \bibnamefont
  {Kajtoch}}\ and\ \bibinfo {author} {\bibfnamefont {Emilia}\ \bibnamefont
  {Witkowska}},\ }\bibfield  {title} {\enquote {\bibinfo {title} {Quantum
  dynamics generated by the two-axis countertwisting hamiltonian},}\ }\href
  {\doibase 10.1103/PhysRevA.92.013623} {\bibfield  {journal} {\bibinfo
  {journal} {Phys. Rev. A}\ }\textbf {\bibinfo {volume} {92}},\ \bibinfo
  {pages} {013623} (\bibinfo {year} {2015})}\BibitemShut {NoStop}%
\bibitem [{\citenamefont {Jiang}\ \emph {et~al.}(2017)\citenamefont {Jiang},
  \citenamefont {Rieffel},\ and\ \citenamefont {Wang}}]{PhysRevA.95.062317}%
  \BibitemOpen
  \bibfield  {author} {\bibinfo {author} {\bibfnamefont {Zhang}\ \bibnamefont
  {Jiang}}, \bibinfo {author} {\bibfnamefont {Eleanor~G.}\ \bibnamefont
  {Rieffel}}, \ and\ \bibinfo {author} {\bibfnamefont {Zhihui}\ \bibnamefont
  {Wang}},\ }\bibfield  {title} {\enquote {\bibinfo {title} {{Near-optimal
  quantum circuit for Grover's unstructured search using a transverse
  field}},}\ }\href {\doibase 10.1103/PhysRevA.95.062317} {\bibfield  {journal}
  {\bibinfo  {journal} {Phys. Rev. A}\ }\textbf {\bibinfo {volume} {95}},\
  \bibinfo {pages} {062317} (\bibinfo {year} {2017})}\BibitemShut {NoStop}%
\bibitem [{\citenamefont {Lange}\ \emph
  {et~al.}(2018{\natexlab{a}})\citenamefont {Lange}, \citenamefont {Peise},
  \citenamefont {Lücke}, \citenamefont {Kruse}, \citenamefont {Vitagliano},
  \citenamefont {Apellaniz}, \citenamefont {Kleinmann}, \citenamefont {Tóth},\
  and\ \citenamefont {Klempt}}]{apell}%
  \BibitemOpen
  \bibfield  {author} {\bibinfo {author} {\bibfnamefont {Karsten}\ \bibnamefont
  {Lange}}, \bibinfo {author} {\bibfnamefont {Jan}\ \bibnamefont {Peise}},
  \bibinfo {author} {\bibfnamefont {Bernd}\ \bibnamefont {Lücke}}, \bibinfo
  {author} {\bibfnamefont {Ilka}\ \bibnamefont {Kruse}}, \bibinfo {author}
  {\bibfnamefont {Giuseppe}\ \bibnamefont {Vitagliano}}, \bibinfo {author}
  {\bibfnamefont {Iagoba}\ \bibnamefont {Apellaniz}}, \bibinfo {author}
  {\bibfnamefont {Matthias}\ \bibnamefont {Kleinmann}}, \bibinfo {author}
  {\bibfnamefont {Géza}\ \bibnamefont {Tóth}}, \ and\ \bibinfo {author}
  {\bibfnamefont {Carsten}\ \bibnamefont {Klempt}},\ }\bibfield  {title}
  {\enquote {\bibinfo {title} {Entanglement between two spatially separated
  atomic modes},}\ }\href {\doibase 10.1126/science.aao2035} {\bibfield
  {journal} {\bibinfo  {journal} {Science}\ }\textbf {\bibinfo {volume}
  {360}},\ \bibinfo {pages} {416--418} (\bibinfo {year}
  {2018}{\natexlab{a}})}\BibitemShut {NoStop}%
\bibitem [{\citenamefont {Krischek}\ \emph {et~al.}(2011)\citenamefont
  {Krischek}, \citenamefont {Schwemmer}, \citenamefont {Wieczorek},
  \citenamefont {Weinfurter}, \citenamefont {Hyllus}, \citenamefont {Pezz\'e},\
  and\ \citenamefont {Smerzi}}]{PhysRevLett.107.080504}%
  \BibitemOpen
  \bibfield  {author} {\bibinfo {author} {\bibfnamefont {Roland}\ \bibnamefont
  {Krischek}}, \bibinfo {author} {\bibfnamefont {Christian}\ \bibnamefont
  {Schwemmer}}, \bibinfo {author} {\bibfnamefont {Witlef}\ \bibnamefont
  {Wieczorek}}, \bibinfo {author} {\bibfnamefont {Harald}\ \bibnamefont
  {Weinfurter}}, \bibinfo {author} {\bibfnamefont {Philipp}\ \bibnamefont
  {Hyllus}}, \bibinfo {author} {\bibfnamefont {Luca}\ \bibnamefont {Pezz\'e}},
  \ and\ \bibinfo {author} {\bibfnamefont {Augusto}\ \bibnamefont {Smerzi}},\
  }\bibfield  {title} {\enquote {\bibinfo {title} {Useful multiparticle
  entanglement and sub-shot-noise sensitivity in experimental phase
  estimation},}\ }\href {\doibase 10.1103/PhysRevLett.107.080504} {\bibfield
  {journal} {\bibinfo  {journal} {Phys. Rev. Lett.}\ }\textbf {\bibinfo
  {volume} {107}},\ \bibinfo {pages} {080504} (\bibinfo {year}
  {2011})}\BibitemShut {NoStop}%
\bibitem [{\citenamefont {Zhou}\ \emph {et~al.}(2023)\citenamefont {Zhou},
  \citenamefont {Kong}, \citenamefont {Lan},\ and\ \citenamefont
  {Zhang}}]{PhysRevResearch.5.013087}%
  \BibitemOpen
  \bibfield  {author} {\bibinfo {author} {\bibfnamefont {Lu}~\bibnamefont
  {Zhou}}, \bibinfo {author} {\bibfnamefont {Jia}\ \bibnamefont {Kong}},
  \bibinfo {author} {\bibfnamefont {Zhihao}\ \bibnamefont {Lan}}, \ and\
  \bibinfo {author} {\bibfnamefont {Weiping}\ \bibnamefont {Zhang}},\
  }\bibfield  {title} {\enquote {\bibinfo {title} {{Dynamical quantum phase
  transitions in a spinor Bose-Einstein condensate and criticality enhanced
  quantum sensing}},}\ }\href {\doibase 10.1103/PhysRevResearch.5.013087}
  {\bibfield  {journal} {\bibinfo  {journal} {Phys. Rev. Res.}\ }\textbf
  {\bibinfo {volume} {5}},\ \bibinfo {pages} {013087} (\bibinfo {year}
  {2023})}\BibitemShut {NoStop}%
\bibitem [{\citenamefont {Niezgoda}\ \emph {et~al.}(2019)\citenamefont
  {Niezgoda}, \citenamefont {Kajtoch}, \citenamefont {Dziekańska},\ and\
  \citenamefont {Witkowska}}]{Niezgoda_2019}%
  \BibitemOpen
  \bibfield  {author} {\bibinfo {author} {\bibfnamefont {Artur}\ \bibnamefont
  {Niezgoda}}, \bibinfo {author} {\bibfnamefont {Dariusz}\ \bibnamefont
  {Kajtoch}}, \bibinfo {author} {\bibfnamefont {Joanna}\ \bibnamefont
  {Dziekańska}}, \ and\ \bibinfo {author} {\bibfnamefont {Emilia}\
  \bibnamefont {Witkowska}},\ }\bibfield  {title} {\enquote {\bibinfo {title}
  {Optimal quantum interferometry robust to detection noise using spin-1 atomic
  condensates},}\ }\href {\doibase 10.1088/1367-2630/ab4099} {\bibfield
  {journal} {\bibinfo  {journal} {New Journal of Physics}\ }\textbf {\bibinfo
  {volume} {21}},\ \bibinfo {pages} {093037} (\bibinfo {year}
  {2019})}\BibitemShut {NoStop}%
\bibitem [{\citenamefont {Guan}\ \emph {et~al.}(2021)\citenamefont {Guan},
  \citenamefont {Biedermann}, \citenamefont {Schwettmann},\ and\ \citenamefont
  {Lewis-Swan}}]{PhysRevA.104.042415}%
  \BibitemOpen
  \bibfield  {author} {\bibinfo {author} {\bibfnamefont {Q.}~\bibnamefont
  {Guan}}, \bibinfo {author} {\bibfnamefont {G.~W.}\ \bibnamefont
  {Biedermann}}, \bibinfo {author} {\bibfnamefont {A.}~\bibnamefont
  {Schwettmann}}, \ and\ \bibinfo {author} {\bibfnamefont {R.~J.}\ \bibnamefont
  {Lewis-Swan}},\ }\bibfield  {title} {\enquote {\bibinfo {title} {Tailored
  generation of quantum states in an entangled spinor interferometer to
  overcome detection noise},}\ }\href {\doibase 10.1103/PhysRevA.104.042415}
  {\bibfield  {journal} {\bibinfo  {journal} {Phys. Rev. A}\ }\textbf {\bibinfo
  {volume} {104}},\ \bibinfo {pages} {042415} (\bibinfo {year}
  {2021})}\BibitemShut {NoStop}%
\bibitem [{\citenamefont {Mao}\ \emph {et~al.}(2023)\citenamefont {Mao},
  \citenamefont {Liu}, \citenamefont {Li}, \citenamefont {Cao}, \citenamefont
  {Chen}, \citenamefont {Xu}, \citenamefont {Tey}, \citenamefont {Huang},\ and\
  \citenamefont {You}}]{Mao2023}%
  \BibitemOpen
  \bibfield  {author} {\bibinfo {author} {\bibfnamefont {Tian-Wei}\
  \bibnamefont {Mao}}, \bibinfo {author} {\bibfnamefont {Qi}~\bibnamefont
  {Liu}}, \bibinfo {author} {\bibfnamefont {Xin-Wei}\ \bibnamefont {Li}},
  \bibinfo {author} {\bibfnamefont {Jia-Hao}\ \bibnamefont {Cao}}, \bibinfo
  {author} {\bibfnamefont {Feng}\ \bibnamefont {Chen}}, \bibinfo {author}
  {\bibfnamefont {Wen-Xin}\ \bibnamefont {Xu}}, \bibinfo {author}
  {\bibfnamefont {Meng~Khoon}\ \bibnamefont {Tey}}, \bibinfo {author}
  {\bibfnamefont {Yi-Xiao}\ \bibnamefont {Huang}}, \ and\ \bibinfo {author}
  {\bibfnamefont {Li}~\bibnamefont {You}},\ }\bibfield  {title} {\enquote
  {\bibinfo {title} {{Quantum-enhanced sensing by echoing spin-nematic
  squeezing in atomic Bose--Einstein condensate}},}\ }\href {\doibase
  10.1038/s41567-023-02168-3} {\bibfield  {journal} {\bibinfo  {journal}
  {Nature Physics}\ }\textbf {\bibinfo {volume} {19}},\ \bibinfo {pages}
  {1585--1590} (\bibinfo {year} {2023})}\BibitemShut {NoStop}%
\bibitem [{\citenamefont {Kitagawa}\ and\ \citenamefont
  {Ueda}(1993)}]{PhysRevA.47.5138}%
  \BibitemOpen
  \bibfield  {author} {\bibinfo {author} {\bibfnamefont {Masahiro}\
  \bibnamefont {Kitagawa}}\ and\ \bibinfo {author} {\bibfnamefont {Masahito}\
  \bibnamefont {Ueda}},\ }\bibfield  {title} {\enquote {\bibinfo {title}
  {Squeezed spin states},}\ }\href {\doibase 10.1103/PhysRevA.47.5138}
  {\bibfield  {journal} {\bibinfo  {journal} {Phys. Rev. A}\ }\textbf {\bibinfo
  {volume} {47}},\ \bibinfo {pages} {5138--5143} (\bibinfo {year}
  {1993})}\BibitemShut {NoStop}%
\bibitem [{\citenamefont {L\"ucke}\ \emph {et~al.}(2014)\citenamefont
  {L\"ucke}, \citenamefont {Peise}, \citenamefont {Vitagliano}, \citenamefont
  {Arlt}, \citenamefont {Santos}, \citenamefont {T\'oth},\ and\ \citenamefont
  {Klempt}}]{PhysRevLett.112.155304}%
  \BibitemOpen
  \bibfield  {author} {\bibinfo {author} {\bibfnamefont {Bernd}\ \bibnamefont
  {L\"ucke}}, \bibinfo {author} {\bibfnamefont {Jan}\ \bibnamefont {Peise}},
  \bibinfo {author} {\bibfnamefont {Giuseppe}\ \bibnamefont {Vitagliano}},
  \bibinfo {author} {\bibfnamefont {Jan}\ \bibnamefont {Arlt}}, \bibinfo
  {author} {\bibfnamefont {Luis}\ \bibnamefont {Santos}}, \bibinfo {author}
  {\bibfnamefont {G\'eza}\ \bibnamefont {T\'oth}}, \ and\ \bibinfo {author}
  {\bibfnamefont {Carsten}\ \bibnamefont {Klempt}},\ }\bibfield  {title}
  {\enquote {\bibinfo {title} {{Detecting Multiparticle Entanglement of Dicke
  States}},}\ }\href {\doibase 10.1103/PhysRevLett.112.155304} {\bibfield
  {journal} {\bibinfo  {journal} {Phys. Rev. Lett.}\ }\textbf {\bibinfo
  {volume} {112}},\ \bibinfo {pages} {155304} (\bibinfo {year}
  {2014})}\BibitemShut {NoStop}%
\bibitem [{\citenamefont {Campos}\ \emph {et~al.}(2003)\citenamefont {Campos},
  \citenamefont {Gerry},\ and\ \citenamefont {Benmoussa}}]{PhysRevA.68.023810}%
  \BibitemOpen
  \bibfield  {author} {\bibinfo {author} {\bibfnamefont {R.~A.}\ \bibnamefont
  {Campos}}, \bibinfo {author} {\bibfnamefont {Christopher~C.}\ \bibnamefont
  {Gerry}}, \ and\ \bibinfo {author} {\bibfnamefont {A.}~\bibnamefont
  {Benmoussa}},\ }\bibfield  {title} {\enquote {\bibinfo {title} {{Optical
  interferometry at the Heisenberg limit with twin Fock states and parity
  measurements}},}\ }\href {\doibase 10.1103/PhysRevA.68.023810} {\bibfield
  {journal} {\bibinfo  {journal} {Phys. Rev. A}\ }\textbf {\bibinfo {volume}
  {68}},\ \bibinfo {pages} {023810} (\bibinfo {year} {2003})}\BibitemShut
  {NoStop}%
\bibitem [{\citenamefont {Gietka}\ and\ \citenamefont
  {Chwede\ifmmode~\acute{n}\else \'{n}\fi{}czuk}(2014)}]{PhysRevA.90.063601}%
  \BibitemOpen
  \bibfield  {author} {\bibinfo {author} {\bibfnamefont {Karol}\ \bibnamefont
  {Gietka}}\ and\ \bibinfo {author} {\bibfnamefont {Jan}\ \bibnamefont
  {Chwede\ifmmode~\acute{n}\else \'{n}\fi{}czuk}},\ }\bibfield  {title}
  {\enquote {\bibinfo {title} {Atom interferometer in a double-well
  potential},}\ }\href {\doibase 10.1103/PhysRevA.90.063601} {\bibfield
  {journal} {\bibinfo  {journal} {Phys. Rev. A}\ }\textbf {\bibinfo {volume}
  {90}},\ \bibinfo {pages} {063601} (\bibinfo {year} {2014})}\BibitemShut
  {NoStop}%
\bibitem [{\citenamefont {Grond}\ \emph {et~al.}(2010)\citenamefont {Grond},
  \citenamefont {Hohenester}, \citenamefont {Mazets},\ and\ \citenamefont
  {Schmiedmayer}}]{Grond_2010}%
  \BibitemOpen
  \bibfield  {author} {\bibinfo {author} {\bibfnamefont {Julian}\ \bibnamefont
  {Grond}}, \bibinfo {author} {\bibfnamefont {Ulrich}\ \bibnamefont
  {Hohenester}}, \bibinfo {author} {\bibfnamefont {Igor}\ \bibnamefont
  {Mazets}}, \ and\ \bibinfo {author} {\bibfnamefont {Jörg}\ \bibnamefont
  {Schmiedmayer}},\ }\bibfield  {title} {\enquote {\bibinfo {title} {{Atom
  interferometry with trapped Bose–Einstein condensates: impact of
  atom–atom interactions}},}\ }\href {\doibase 10.1088/1367-2630/12/6/065036}
  {\bibfield  {journal} {\bibinfo  {journal} {New Journal of Physics}\ }\textbf
  {\bibinfo {volume} {12}},\ \bibinfo {pages} {065036} (\bibinfo {year}
  {2010})}\BibitemShut {NoStop}%
\bibitem [{\citenamefont {Dorner}\ \emph {et~al.}(2009)\citenamefont {Dorner},
  \citenamefont {Demkowicz-Dobrzanski}, \citenamefont {Smith}, \citenamefont
  {Lundeen}, \citenamefont {Wasilewski}, \citenamefont {Banaszek},\ and\
  \citenamefont {Walmsley}}]{PhysRevLett.102.040403}%
  \BibitemOpen
  \bibfield  {author} {\bibinfo {author} {\bibfnamefont {U.}~\bibnamefont
  {Dorner}}, \bibinfo {author} {\bibfnamefont {R.}~\bibnamefont
  {Demkowicz-Dobrzanski}}, \bibinfo {author} {\bibfnamefont {B.~J.}\
  \bibnamefont {Smith}}, \bibinfo {author} {\bibfnamefont {J.~S.}\ \bibnamefont
  {Lundeen}}, \bibinfo {author} {\bibfnamefont {W.}~\bibnamefont {Wasilewski}},
  \bibinfo {author} {\bibfnamefont {K.}~\bibnamefont {Banaszek}}, \ and\
  \bibinfo {author} {\bibfnamefont {I.~A.}\ \bibnamefont {Walmsley}},\
  }\bibfield  {title} {\enquote {\bibinfo {title} {Optimal quantum phase
  estimation},}\ }\href {\doibase 10.1103/PhysRevLett.102.040403} {\bibfield
  {journal} {\bibinfo  {journal} {Phys. Rev. Lett.}\ }\textbf {\bibinfo
  {volume} {102}},\ \bibinfo {pages} {040403} (\bibinfo {year}
  {2009})}\BibitemShut {NoStop}%
\bibitem [{\citenamefont {Pezz\`e}\ \emph {et~al.}(2018)\citenamefont
  {Pezz\`e}, \citenamefont {Smerzi}, \citenamefont {Oberthaler}, \citenamefont
  {Schmied},\ and\ \citenamefont {Treutlein}}]{RevModPhys.90.035005}%
  \BibitemOpen
  \bibfield  {author} {\bibinfo {author} {\bibfnamefont {Luca}\ \bibnamefont
  {Pezz\`e}}, \bibinfo {author} {\bibfnamefont {Augusto}\ \bibnamefont
  {Smerzi}}, \bibinfo {author} {\bibfnamefont {Markus~K.}\ \bibnamefont
  {Oberthaler}}, \bibinfo {author} {\bibfnamefont {Roman}\ \bibnamefont
  {Schmied}}, \ and\ \bibinfo {author} {\bibfnamefont {Philipp}\ \bibnamefont
  {Treutlein}},\ }\bibfield  {title} {\enquote {\bibinfo {title} {Quantum
  metrology with nonclassical states of atomic ensembles},}\ }\href {\doibase
  10.1103/RevModPhys.90.035005} {\bibfield  {journal} {\bibinfo  {journal}
  {Rev. Mod. Phys.}\ }\textbf {\bibinfo {volume} {90}},\ \bibinfo {pages}
  {035005} (\bibinfo {year} {2018})}\BibitemShut {NoStop}%
\bibitem [{\citenamefont {Hyllus}\ \emph {et~al.}(2010)\citenamefont {Hyllus},
  \citenamefont {G\"uhne},\ and\ \citenamefont {Smerzi}}]{PhysRevA.82.012337}%
  \BibitemOpen
  \bibfield  {author} {\bibinfo {author} {\bibfnamefont {Philipp}\ \bibnamefont
  {Hyllus}}, \bibinfo {author} {\bibfnamefont {Otfried}\ \bibnamefont
  {G\"uhne}}, \ and\ \bibinfo {author} {\bibfnamefont {Augusto}\ \bibnamefont
  {Smerzi}},\ }\bibfield  {title} {\enquote {\bibinfo {title} {Not all pure
  entangled states are useful for sub-shot-noise interferometry},}\ }\href
  {\doibase 10.1103/PhysRevA.82.012337} {\bibfield  {journal} {\bibinfo
  {journal} {Phys. Rev. A}\ }\textbf {\bibinfo {volume} {82}},\ \bibinfo
  {pages} {012337} (\bibinfo {year} {2010})}\BibitemShut {NoStop}%
\bibitem [{\citenamefont {Meiser}\ and\ \citenamefont
  {Holland}(2009)}]{Meiser_2009}%
  \BibitemOpen
  \bibfield  {author} {\bibinfo {author} {\bibfnamefont {D}~\bibnamefont
  {Meiser}}\ and\ \bibinfo {author} {\bibfnamefont {M~J}\ \bibnamefont
  {Holland}},\ }\bibfield  {title} {\enquote {\bibinfo {title} {{Robustness of
  Heisenberg-limited interferometry with balanced Fock states}},}\ }\href
  {\doibase 10.1088/1367-2630/11/3/033002} {\bibfield  {journal} {\bibinfo
  {journal} {New Journal of Physics}\ }\textbf {\bibinfo {volume} {11}},\
  \bibinfo {pages} {033002} (\bibinfo {year} {2009})}\BibitemShut {NoStop}%
\bibitem [{\citenamefont {Bollinger}\ \emph {et~al.}(1996)\citenamefont
  {Bollinger}, \citenamefont {Itano}, \citenamefont {Wineland},\ and\
  \citenamefont {Heinzen}}]{PhysRevA.54.R4649}%
  \BibitemOpen
  \bibfield  {author} {\bibinfo {author} {\bibfnamefont {J.~J.}\ \bibnamefont
  {Bollinger}}, \bibinfo {author} {\bibfnamefont {Wayne~M.}\ \bibnamefont
  {Itano}}, \bibinfo {author} {\bibfnamefont {D.~J.}\ \bibnamefont {Wineland}},
  \ and\ \bibinfo {author} {\bibfnamefont {D.~J.}\ \bibnamefont {Heinzen}},\
  }\bibfield  {title} {\enquote {\bibinfo {title} {Optimal frequency
  measurements with maximally correlated states},}\ }\href {\doibase
  10.1103/PhysRevA.54.R4649} {\bibfield  {journal} {\bibinfo  {journal} {Phys.
  Rev. A}\ }\textbf {\bibinfo {volume} {54}},\ \bibinfo {pages} {R4649--R4652}
  (\bibinfo {year} {1996})}\BibitemShut {NoStop}%
\bibitem [{\citenamefont {Gerry}(2000)}]{PhysRevA.61.043811}%
  \BibitemOpen
  \bibfield  {author} {\bibinfo {author} {\bibfnamefont {Christopher~C.}\
  \bibnamefont {Gerry}},\ }\bibfield  {title} {\enquote {\bibinfo {title}
  {Heisenberg-limit interferometry with four-wave mixers operating in a
  nonlinear regime},}\ }\href {\doibase 10.1103/PhysRevA.61.043811} {\bibfield
  {journal} {\bibinfo  {journal} {Phys. Rev. A}\ }\textbf {\bibinfo {volume}
  {61}},\ \bibinfo {pages} {043811} (\bibinfo {year} {2000})}\BibitemShut
  {NoStop}%
\bibitem [{\citenamefont {Gerry}\ \emph {et~al.}(2004)\citenamefont {Gerry},
  \citenamefont {Campos},\ and\ \citenamefont
  {Benmoussa}}]{PhysRevLett.92.209301}%
  \BibitemOpen
  \bibfield  {author} {\bibinfo {author} {\bibfnamefont {Christopher~C.}\
  \bibnamefont {Gerry}}, \bibinfo {author} {\bibfnamefont {R.~A.}\ \bibnamefont
  {Campos}}, \ and\ \bibinfo {author} {\bibfnamefont {A.}~\bibnamefont
  {Benmoussa}},\ }\bibfield  {title} {\enquote {\bibinfo {title} {{Comment on
  ``Interferometric Detection of Optical Phase Shifts at the Heisenberg
  Limit''}},}\ }\href {\doibase 10.1103/PhysRevLett.92.209301} {\bibfield
  {journal} {\bibinfo  {journal} {Phys. Rev. Lett.}\ }\textbf {\bibinfo
  {volume} {92}},\ \bibinfo {pages} {209301} (\bibinfo {year}
  {2004})}\BibitemShut {NoStop}%
\bibitem [{\citenamefont {Giovannetti}\ \emph {et~al.}(2006)\citenamefont
  {Giovannetti}, \citenamefont {Lloyd},\ and\ \citenamefont
  {Maccone}}]{PhysRevLett.96.010401}%
  \BibitemOpen
  \bibfield  {author} {\bibinfo {author} {\bibfnamefont {Vittorio}\
  \bibnamefont {Giovannetti}}, \bibinfo {author} {\bibfnamefont {Seth}\
  \bibnamefont {Lloyd}}, \ and\ \bibinfo {author} {\bibfnamefont {Lorenzo}\
  \bibnamefont {Maccone}},\ }\bibfield  {title} {\enquote {\bibinfo {title}
  {Quantum metrology},}\ }\href {\doibase 10.1103/PhysRevLett.96.010401}
  {\bibfield  {journal} {\bibinfo  {journal} {Phys. Rev. Lett.}\ }\textbf
  {\bibinfo {volume} {96}},\ \bibinfo {pages} {010401} (\bibinfo {year}
  {2006})}\BibitemShut {NoStop}%
\bibitem [{\citenamefont {Kim}\ \emph {et~al.}(1998)\citenamefont {Kim},
  \citenamefont {Pfister}, \citenamefont {Holland}, \citenamefont {Noh},\ and\
  \citenamefont {Hall}}]{PhysRevA.57.4004}%
  \BibitemOpen
  \bibfield  {author} {\bibinfo {author} {\bibfnamefont {Taesoo}\ \bibnamefont
  {Kim}}, \bibinfo {author} {\bibfnamefont {Olivier}\ \bibnamefont {Pfister}},
  \bibinfo {author} {\bibfnamefont {Murray~J.}\ \bibnamefont {Holland}},
  \bibinfo {author} {\bibfnamefont {Jaewoo}\ \bibnamefont {Noh}}, \ and\
  \bibinfo {author} {\bibfnamefont {John~L.}\ \bibnamefont {Hall}},\ }\bibfield
   {title} {\enquote {\bibinfo {title} {{Influence of decorrelation on
  Heisenberg-limited interferometry with quantum correlated photons}},}\ }\href
  {\doibase 10.1103/PhysRevA.57.4004} {\bibfield  {journal} {\bibinfo
  {journal} {Phys. Rev. A}\ }\textbf {\bibinfo {volume} {57}},\ \bibinfo
  {pages} {4004--4013} (\bibinfo {year} {1998})}\BibitemShut {NoStop}%
\bibitem [{\citenamefont {Volkoff}\ and\ \citenamefont
  {Martin}(2024)}]{volkmart}%
  \BibitemOpen
  \bibfield  {author} {\bibinfo {author} {\bibfnamefont {T~J}\ \bibnamefont
  {Volkoff}}\ and\ \bibinfo {author} {\bibfnamefont {Michael~J}\ \bibnamefont
  {Martin}},\ }\bibfield  {title} {\enquote {\bibinfo {title} {{Saturating the
  one-axis twisting quantum Cramér-Rao bound with a total spin readout}},}\
  }\href {\doibase 10.1088/2399-6528/ad1dc8} {\bibfield  {journal} {\bibinfo
  {journal} {Journal of Physics Communications}\ }\textbf {\bibinfo {volume}
  {8}},\ \bibinfo {pages} {015004} (\bibinfo {year} {2024})}\BibitemShut
  {NoStop}%
\bibitem [{\citenamefont {Davis}\ \emph {et~al.}(2016)\citenamefont {Davis},
  \citenamefont {Bentsen},\ and\ \citenamefont
  {Schleier-Smith}}]{PhysRevLett.116.053601}%
  \BibitemOpen
  \bibfield  {author} {\bibinfo {author} {\bibfnamefont {Emily}\ \bibnamefont
  {Davis}}, \bibinfo {author} {\bibfnamefont {Gregory}\ \bibnamefont
  {Bentsen}}, \ and\ \bibinfo {author} {\bibfnamefont {Monika}\ \bibnamefont
  {Schleier-Smith}},\ }\bibfield  {title} {\enquote {\bibinfo {title}
  {{Approaching the Heisenberg Limit without Single-Particle Detection}},}\
  }\href {\doibase 10.1103/PhysRevLett.116.053601} {\bibfield  {journal}
  {\bibinfo  {journal} {Phys. Rev. Lett.}\ }\textbf {\bibinfo {volume} {116}},\
  \bibinfo {pages} {053601} (\bibinfo {year} {2016})}\BibitemShut {NoStop}%
\bibitem [{\citenamefont {Volkoff}\ and\ \citenamefont
  {Martin}(2022)}]{PhysRevResearch.4.013236}%
  \BibitemOpen
  \bibfield  {author} {\bibinfo {author} {\bibfnamefont {T.~J.}\ \bibnamefont
  {Volkoff}}\ and\ \bibinfo {author} {\bibfnamefont {Michael~J.}\ \bibnamefont
  {Martin}},\ }\bibfield  {title} {\enquote {\bibinfo {title} {{Asymptotic
  optimality of twist-untwist protocols for Heisenberg scaling in atom-based
  sensing}},}\ }\href {\doibase 10.1103/PhysRevResearch.4.013236} {\bibfield
  {journal} {\bibinfo  {journal} {Phys. Rev. Res.}\ }\textbf {\bibinfo {volume}
  {4}},\ \bibinfo {pages} {013236} (\bibinfo {year} {2022})}\BibitemShut
  {NoStop}%
\bibitem [{\citenamefont {Javanainen}\ and\ \citenamefont
  {Wilkens}(1997)}]{PhysRevLett.78.4675}%
  \BibitemOpen
  \bibfield  {author} {\bibinfo {author} {\bibfnamefont {Juha}\ \bibnamefont
  {Javanainen}}\ and\ \bibinfo {author} {\bibfnamefont {Martin}\ \bibnamefont
  {Wilkens}},\ }\bibfield  {title} {\enquote {\bibinfo {title} {{Phase and
  Phase Diffusion of a Split Bose-Einstein Condensate}},}\ }\href {\doibase
  10.1103/PhysRevLett.78.4675} {\bibfield  {journal} {\bibinfo  {journal}
  {Phys. Rev. Lett.}\ }\textbf {\bibinfo {volume} {78}},\ \bibinfo {pages}
  {4675--4678} (\bibinfo {year} {1997})}\BibitemShut {NoStop}%
\bibitem [{\citenamefont {Sanders}\ and\ \citenamefont
  {Milburn}(1995)}]{PhysRevLett.75.2944}%
  \BibitemOpen
  \bibfield  {author} {\bibinfo {author} {\bibfnamefont {B.~C.}\ \bibnamefont
  {Sanders}}\ and\ \bibinfo {author} {\bibfnamefont {G.~J.}\ \bibnamefont
  {Milburn}},\ }\bibfield  {title} {\enquote {\bibinfo {title} {Optimal quantum
  measurements for phase estimation},}\ }\href {\doibase
  10.1103/PhysRevLett.75.2944} {\bibfield  {journal} {\bibinfo  {journal}
  {Phys. Rev. Lett.}\ }\textbf {\bibinfo {volume} {75}},\ \bibinfo {pages}
  {2944--2947} (\bibinfo {year} {1995})}\BibitemShut {NoStop}%
\bibitem [{\citenamefont {Jo}\ \emph {et~al.}(2007)\citenamefont {Jo},
  \citenamefont {Shin}, \citenamefont {Will}, \citenamefont {Pasquini},
  \citenamefont {Saba}, \citenamefont {Ketterle}, \citenamefont {Pritchard},
  \citenamefont {Vengalattore},\ and\ \citenamefont
  {Prentiss}}]{PhysRevLett.98.030407}%
  \BibitemOpen
  \bibfield  {author} {\bibinfo {author} {\bibfnamefont {G.-B.}\ \bibnamefont
  {Jo}}, \bibinfo {author} {\bibfnamefont {Y.}~\bibnamefont {Shin}}, \bibinfo
  {author} {\bibfnamefont {S.}~\bibnamefont {Will}}, \bibinfo {author}
  {\bibfnamefont {T.~A.}\ \bibnamefont {Pasquini}}, \bibinfo {author}
  {\bibfnamefont {M.}~\bibnamefont {Saba}}, \bibinfo {author} {\bibfnamefont
  {W.}~\bibnamefont {Ketterle}}, \bibinfo {author} {\bibfnamefont {D.~E.}\
  \bibnamefont {Pritchard}}, \bibinfo {author} {\bibfnamefont {M.}~\bibnamefont
  {Vengalattore}}, \ and\ \bibinfo {author} {\bibfnamefont {M.}~\bibnamefont
  {Prentiss}},\ }\bibfield  {title} {\enquote {\bibinfo {title} {{Long Phase
  Coherence Time and Number Squeezing of Two Bose-Einstein Condensates on an
  Atom Chip}},}\ }\href {\doibase 10.1103/PhysRevLett.98.030407} {\bibfield
  {journal} {\bibinfo  {journal} {Phys. Rev. Lett.}\ }\textbf {\bibinfo
  {volume} {98}},\ \bibinfo {pages} {030407} (\bibinfo {year}
  {2007})}\BibitemShut {NoStop}%
\bibitem [{\citenamefont {Vidrighin}\ \emph {et~al.}(2014)\citenamefont
  {Vidrighin}, \citenamefont {Donati}, \citenamefont {Genoni}, \citenamefont
  {Jin}, \citenamefont {Kolthammer}, \citenamefont {Kim}, \citenamefont
  {Datta}, \citenamefont {Barbieri},\ and\ \citenamefont
  {Walmsley}}]{Vidrighin2014}%
  \BibitemOpen
  \bibfield  {author} {\bibinfo {author} {\bibfnamefont {Mihai~D.}\
  \bibnamefont {Vidrighin}}, \bibinfo {author} {\bibfnamefont {Gaia}\
  \bibnamefont {Donati}}, \bibinfo {author} {\bibfnamefont {Marco~G.}\
  \bibnamefont {Genoni}}, \bibinfo {author} {\bibfnamefont {Xian-Min}\
  \bibnamefont {Jin}}, \bibinfo {author} {\bibfnamefont {W.~Steven}\
  \bibnamefont {Kolthammer}}, \bibinfo {author} {\bibfnamefont {M.~S.}\
  \bibnamefont {Kim}}, \bibinfo {author} {\bibfnamefont {Animesh}\ \bibnamefont
  {Datta}}, \bibinfo {author} {\bibfnamefont {Marco}\ \bibnamefont {Barbieri}},
  \ and\ \bibinfo {author} {\bibfnamefont {Ian~A.}\ \bibnamefont {Walmsley}},\
  }\bibfield  {title} {\enquote {\bibinfo {title} {Joint estimation of phase
  and phase diffusion for quantum metrology},}\ }\href {\doibase
  10.1038/ncomms4532} {\bibfield  {journal} {\bibinfo  {journal} {Nature
  Communications}\ }\textbf {\bibinfo {volume} {5}},\ \bibinfo {pages} {3532}
  (\bibinfo {year} {2014})}\BibitemShut {NoStop}%
\bibitem [{\citenamefont {Genoni}\ \emph {et~al.}(2011)\citenamefont {Genoni},
  \citenamefont {Olivares},\ and\ \citenamefont
  {Paris}}]{PhysRevLett.106.153603}%
  \BibitemOpen
  \bibfield  {author} {\bibinfo {author} {\bibfnamefont {Marco~G.}\
  \bibnamefont {Genoni}}, \bibinfo {author} {\bibfnamefont {Stefano}\
  \bibnamefont {Olivares}}, \ and\ \bibinfo {author} {\bibfnamefont {Matteo
  G.~A.}\ \bibnamefont {Paris}},\ }\bibfield  {title} {\enquote {\bibinfo
  {title} {Optical phase estimation in the presence of phase diffusion},}\
  }\href {\doibase 10.1103/PhysRevLett.106.153603} {\bibfield  {journal}
  {\bibinfo  {journal} {Phys. Rev. Lett.}\ }\textbf {\bibinfo {volume} {106}},\
  \bibinfo {pages} {153603} (\bibinfo {year} {2011})}\BibitemShut {NoStop}%
\bibitem [{\citenamefont {Pezz\'e}\ and\ \citenamefont
  {Smerzi}(2006)}]{PhysRevA.73.011801}%
  \BibitemOpen
  \bibfield  {author} {\bibinfo {author} {\bibfnamefont {Luca}\ \bibnamefont
  {Pezz\'e}}\ and\ \bibinfo {author} {\bibfnamefont {Augusto}\ \bibnamefont
  {Smerzi}},\ }\bibfield  {title} {\enquote {\bibinfo {title} {{Phase
  sensitivity of a Mach-Zehnder interferometer}},}\ }\href {\doibase
  10.1103/PhysRevA.73.011801} {\bibfield  {journal} {\bibinfo  {journal} {Phys.
  Rev. A}\ }\textbf {\bibinfo {volume} {73}},\ \bibinfo {pages} {011801}
  (\bibinfo {year} {2006})}\BibitemShut {NoStop}%
\bibitem [{\citenamefont {Uys}\ and\ \citenamefont
  {Meystre}(2007)}]{PhysRevA.76.013804}%
  \BibitemOpen
  \bibfield  {author} {\bibinfo {author} {\bibfnamefont {H.}~\bibnamefont
  {Uys}}\ and\ \bibinfo {author} {\bibfnamefont {P.}~\bibnamefont {Meystre}},\
  }\bibfield  {title} {\enquote {\bibinfo {title} {{Quantum states for
  Heisenberg-limited interferometry}},}\ }\href {\doibase
  10.1103/PhysRevA.76.013804} {\bibfield  {journal} {\bibinfo  {journal} {Phys.
  Rev. A}\ }\textbf {\bibinfo {volume} {76}},\ \bibinfo {pages} {013804}
  (\bibinfo {year} {2007})}\BibitemShut {NoStop}%
\bibitem [{\citenamefont {Apellaniz}\ \emph {et~al.}(2015)\citenamefont
  {Apellaniz}, \citenamefont {Lücke}, \citenamefont {Peise}, \citenamefont
  {Klempt},\ and\ \citenamefont {Tóth}}]{Apellaniz_2015}%
  \BibitemOpen
  \bibfield  {author} {\bibinfo {author} {\bibfnamefont {Iagoba}\ \bibnamefont
  {Apellaniz}}, \bibinfo {author} {\bibfnamefont {Bernd}\ \bibnamefont
  {Lücke}}, \bibinfo {author} {\bibfnamefont {Jan}\ \bibnamefont {Peise}},
  \bibinfo {author} {\bibfnamefont {Carsten}\ \bibnamefont {Klempt}}, \ and\
  \bibinfo {author} {\bibfnamefont {Géza}\ \bibnamefont {Tóth}},\ }\bibfield
  {title} {\enquote {\bibinfo {title} {{Detecting metrologically useful
  entanglement in the vicinity of Dicke states}},}\ }\href {\doibase
  10.1088/1367-2630/17/8/083027} {\bibfield  {journal} {\bibinfo  {journal}
  {New Journal of Physics}\ }\textbf {\bibinfo {volume} {17}},\ \bibinfo
  {pages} {083027} (\bibinfo {year} {2015})}\BibitemShut {NoStop}%
\bibitem [{\citenamefont {Braunstein}\ and\ \citenamefont
  {Caves}(1994)}]{PhysRevLett.72.3439}%
  \BibitemOpen
  \bibfield  {author} {\bibinfo {author} {\bibfnamefont {Samuel~L.}\
  \bibnamefont {Braunstein}}\ and\ \bibinfo {author} {\bibfnamefont
  {Carlton~M.}\ \bibnamefont {Caves}},\ }\bibfield  {title} {\enquote {\bibinfo
  {title} {Statistical distance and the geometry of quantum states},}\ }\href
  {\doibase 10.1103/PhysRevLett.72.3439} {\bibfield  {journal} {\bibinfo
  {journal} {Phys. Rev. Lett.}\ }\textbf {\bibinfo {volume} {72}},\ \bibinfo
  {pages} {3439--3443} (\bibinfo {year} {1994})}\BibitemShut {NoStop}%
\bibitem [{\citenamefont {Ge}\ \emph {et~al.}(2018)\citenamefont {Ge},
  \citenamefont {Jacobs}, \citenamefont {Eldredge}, \citenamefont {Gorshkov},\
  and\ \citenamefont {Foss-Feig}}]{PhysRevLett.121.043604}%
  \BibitemOpen
  \bibfield  {author} {\bibinfo {author} {\bibfnamefont {Wenchao}\ \bibnamefont
  {Ge}}, \bibinfo {author} {\bibfnamefont {Kurt}\ \bibnamefont {Jacobs}},
  \bibinfo {author} {\bibfnamefont {Zachary}\ \bibnamefont {Eldredge}},
  \bibinfo {author} {\bibfnamefont {Alexey~V.}\ \bibnamefont {Gorshkov}}, \
  and\ \bibinfo {author} {\bibfnamefont {Michael}\ \bibnamefont {Foss-Feig}},\
  }\bibfield  {title} {\enquote {\bibinfo {title} {Distributed quantum
  metrology with linear networks and separable inputs},}\ }\href {\doibase
  10.1103/PhysRevLett.121.043604} {\bibfield  {journal} {\bibinfo  {journal}
  {Phys. Rev. Lett.}\ }\textbf {\bibinfo {volume} {121}},\ \bibinfo {pages}
  {043604} (\bibinfo {year} {2018})}\BibitemShut {NoStop}%
\bibitem [{\citenamefont {Proctor}\ \emph {et~al.}(2018)\citenamefont
  {Proctor}, \citenamefont {Knott},\ and\ \citenamefont
  {Dunningham}}]{PhysRevLett.120.080501}%
  \BibitemOpen
  \bibfield  {author} {\bibinfo {author} {\bibfnamefont {Timothy~J.}\
  \bibnamefont {Proctor}}, \bibinfo {author} {\bibfnamefont {Paul~A.}\
  \bibnamefont {Knott}}, \ and\ \bibinfo {author} {\bibfnamefont {Jacob~A.}\
  \bibnamefont {Dunningham}},\ }\bibfield  {title} {\enquote {\bibinfo {title}
  {Multiparameter estimation in networked quantum sensors},}\ }\href {\doibase
  10.1103/PhysRevLett.120.080501} {\bibfield  {journal} {\bibinfo  {journal}
  {Phys. Rev. Lett.}\ }\textbf {\bibinfo {volume} {120}},\ \bibinfo {pages}
  {080501} (\bibinfo {year} {2018})}\BibitemShut {NoStop}%
\bibitem [{\citenamefont {Volkoff}(2020)}]{volkbiid}%
  \BibitemOpen
  \bibfield  {author} {\bibinfo {author} {\bibfnamefont {Tyler~J.}\
  \bibnamefont {Volkoff}},\ }\bibfield  {title} {\enquote {\bibinfo {title}
  {Distillation of maximally correlated bosonic matter from many-body quantum
  coherence},}\ }\href {\doibase 10.22331/q-2020-09-24-330} {\bibfield
  {journal} {\bibinfo  {journal} {{Quantum}}\ }\textbf {\bibinfo {volume}
  {4}},\ \bibinfo {pages} {330} (\bibinfo {year} {2020})}\BibitemShut {NoStop}%
\bibitem [{\citenamefont {Gessner}\ \emph {et~al.}(2018)\citenamefont
  {Gessner}, \citenamefont {Pezz\`e},\ and\ \citenamefont
  {Smerzi}}]{PhysRevLett.121.130503}%
  \BibitemOpen
  \bibfield  {author} {\bibinfo {author} {\bibfnamefont {Manuel}\ \bibnamefont
  {Gessner}}, \bibinfo {author} {\bibfnamefont {Luca}\ \bibnamefont {Pezz\`e}},
  \ and\ \bibinfo {author} {\bibfnamefont {Augusto}\ \bibnamefont {Smerzi}},\
  }\bibfield  {title} {\enquote {\bibinfo {title} {Sensitivity bounds for
  multiparameter quantum metrology},}\ }\href {\doibase
  10.1103/PhysRevLett.121.130503} {\bibfield  {journal} {\bibinfo  {journal}
  {Phys. Rev. Lett.}\ }\textbf {\bibinfo {volume} {121}},\ \bibinfo {pages}
  {130503} (\bibinfo {year} {2018})}\BibitemShut {NoStop}%
\bibitem [{\citenamefont {Volkoff}\ and\ \citenamefont
  {Sarovar}(2018)}]{PhysRevA.98.032325}%
  \BibitemOpen
  \bibfield  {author} {\bibinfo {author} {\bibfnamefont {T.~J.}\ \bibnamefont
  {Volkoff}}\ and\ \bibinfo {author} {\bibfnamefont {Mohan}\ \bibnamefont
  {Sarovar}},\ }\bibfield  {title} {\enquote {\bibinfo {title} {{Optimality of
  Gaussian receivers for practical Gaussian distributed sensing}},}\ }\href
  {\doibase 10.1103/PhysRevA.98.032325} {\bibfield  {journal} {\bibinfo
  {journal} {Phys. Rev. A}\ }\textbf {\bibinfo {volume} {98}},\ \bibinfo
  {pages} {032325} (\bibinfo {year} {2018})}\BibitemShut {NoStop}%
\bibitem [{\citenamefont {Lange}\ \emph
  {et~al.}(2018{\natexlab{b}})\citenamefont {Lange}, \citenamefont {Peise},
  \citenamefont {Lücke}, \citenamefont {Kruse}, \citenamefont {Vitagliano},
  \citenamefont {Apellaniz}, \citenamefont {Kleinmann}, \citenamefont {Tóth},\
  and\ \citenamefont {Klempt}}]{tothapell}%
  \BibitemOpen
  \bibfield  {author} {\bibinfo {author} {\bibfnamefont {Karsten}\ \bibnamefont
  {Lange}}, \bibinfo {author} {\bibfnamefont {Jan}\ \bibnamefont {Peise}},
  \bibinfo {author} {\bibfnamefont {Bernd}\ \bibnamefont {Lücke}}, \bibinfo
  {author} {\bibfnamefont {Ilka}\ \bibnamefont {Kruse}}, \bibinfo {author}
  {\bibfnamefont {Giuseppe}\ \bibnamefont {Vitagliano}}, \bibinfo {author}
  {\bibfnamefont {Iagoba}\ \bibnamefont {Apellaniz}}, \bibinfo {author}
  {\bibfnamefont {Matthias}\ \bibnamefont {Kleinmann}}, \bibinfo {author}
  {\bibfnamefont {Géza}\ \bibnamefont {Tóth}}, \ and\ \bibinfo {author}
  {\bibfnamefont {Carsten}\ \bibnamefont {Klempt}},\ }\bibfield  {title}
  {\enquote {\bibinfo {title} {Entanglement between two spatially separated
  atomic modes},}\ }\href {\doibase 10.1126/science.aao2035} {\bibfield
  {journal} {\bibinfo  {journal} {Science}\ }\textbf {\bibinfo {volume}
  {360}},\ \bibinfo {pages} {416--418} (\bibinfo {year}
  {2018}{\natexlab{b}})}\BibitemShut {NoStop}%
\end{thebibliography}%

\onecolumngrid
\appendix
\section{\label{sec:appa}Moments of $J_{z}^{2}$}
We show the first and second moments of $J_{z}^{2}$ utilized in the calculation of the method of moments error in Theorem 1.
\begin{align}
    \langle J_{z}^{2}\rangle_{e^{i\theta J_{y}}\ket{{N\over 2}-k,{N\over 2}+k}}&= k^{2}\cos^{2}\theta + {\sin^{2}\theta\over 2}\left( {N^{2}\over 4} - k^{2} + {N\over 2}\right) \nonumber \\
    {d\over d\theta}\langle J_{z}^{2}\rangle_{e^{i\theta J_{y}}\ket{{N\over 2}-k,{N\over 2}+k}}&= {\sin2\theta\over 2}\left( {N^{2}\over 4} - 3k^{2} + {N\over 2}\right)\nonumber \\
    \langle J_{z}^{4}\rangle_{e^{i\theta J_{y}}\ket{{N\over 2}-k,{N\over 2}+k}}&= \left( k^{2}\cos^{2}\theta + {\sin^{2}\theta\over 2}\left( {N^{2}\over 4}-k^{2}+{N\over 2}\right) \right)^{2} \nonumber \\
    &+ {(2k-1)^{2}\sin^{2}\theta\cos^{2}\theta \over 4}\left( {N\over 2}-k+1 \right) \left( {N\over 2}+k \right) \nonumber \\
    &+ {(2k+1)^{2}\sin^{2}\theta\cos^{2}\theta \over 4}\left( {N\over 2}+k+1 \right) \left( {N\over 2}-k \right) \nonumber \\
    &+ {\sin^{4}\theta \over 16} \left( {N\over 2}-k+1\right)\left({N\over 2}-k+2 \right)\left( {N\over 2}+k\right)\left( {N\over 2}+k-1\right)\nonumber \\
    &+ {\sin^{4}\theta \over 16} \left( {N\over 2}-k\right)\left({N\over 2}-k-1 \right)\left( {N\over 2}+k+1\right)\left( {N\over 2}+k+2\right)
    \label{eqn:uuu}
\end{align}

\section{\label{sec:appb} Proof of Eq. (\ref{eqn:rhonk}) and QFI of $\mathcal{E}_{1}(\ket{{N\over 2},{N\over 2}}\bra{{N\over 2},{N\over 2}})$}

Consider a Dicke state $\ket{N-m,m}$ with $m\le {N\over 2}$, which we associate with the vector $p_{0}=(0,\ldots,0,1)^{\intercal}$ of length $m+1$ by writing
\begin{equation}
    \ket{N-m,m}\bra{N-m,m} = \sum_{\ell=1}^{m+1}(p_{0})_{\ell}\ket{N-(\ell-1),\ell-1}\bra{N-(\ell-1),\ell-1}.
\end{equation}
Define the sequence $n_{k}=N-k$ for $k=0,\ldots,K$ as the total number of particles present after loss of $k$ particles.
Using the ``first quantized'' expression analogous to (2), one finds that tracing over the $N$-th particle (choosing the $N$-th particle without loss of generality because of the permutation symmetry of the state) results in the state
\begin{align}
    &{} \left(1-{m\over n_{0}}\right) \ket{N-1-m,m}\bra{N-1-m,m} + {m\over n_{0}}\ket{N-m,m-1}\bra{N-m,m-1} \nonumber \\
    &{} = \sum_{\ell=1}^{m+1}(Q_{0}p_{0})_{\ell}\ket{N-1-(\ell-1),\ell-1}\bra{N-1-(\ell-1),\ell-1}\nonumber \\
    &{} =  \sum_{\ell=1}^{m+1}(Q_{0}p_{0})_{\ell}P^{{N-1\over 2}}_{\ell-1}\nonumber \\
    &{}= \sum_{\ell=1}^{m+1}(p_{1})_{\ell}P^{{N-1\over 2}}_{\ell-1}
    \label{eqn:bbt}
\end{align}
where $Q_{k}$ is the sequence of matrices in (16) and $P^{J}_{\ell}$ is the projection to Dicke state $\ket{2J-\ell,\ell}$ for any spin-$J$ SU(2) representation. Note that one can interpret the above state as the dynamics corresponding to a particle having been lost from the $\ket{0}$ mode ($\ket{1}$ mode) with probability $1-{m\over n_{0}}$ ($m\over n_{0}$). Therefore, for general $k$, the $m$-th diagonal (superdiagonal) element of $Q_{k}$ corresponds to the probability of losing a particle from the $\ket{0}$ mode ($\ket{1}$ mode). Iterating this stochastic process up to the $K$-th loss gives the statistical mixture
\begin{equation}
    \sum_{\ell=1}^{m+1}(\prod_{k=0}^{K}Q_{k}p_{0})_{\ell}P^{{N-K\over 2}}_{\ell-1} = \sum_{\ell=1}^{m+1}(p_{K})_{\ell}P^{{N-K\over 2}}_{\ell-1} .
\end{equation}

We now calculate the QFI of the probe state
\begin{equation}
    e^{-i\theta J_{y}}\rho_{N,1}e^{i\theta J_{y}} = e^{-i\theta J_{y}}\mathcal{E}_{1} \left( \ket{{N\over 2},{N\over 2}}\bra{{N\over 2},{N\over 2}}\right)e^{i\theta J_{y}}
\end{equation}
analytically, where $\rho_{N,K}$ is given in (\ref{eqn:rhonk}). From (\ref{eqn:bbt}) one finds that $\rho_{N,1}$ is not full rank, so we replace the probe state by the $\epsilon$-parametrized full rank state
\begin{equation}
    (1-\epsilon)e^{-i\theta J_{y}}\rho_{N,1}e^{i\theta J_{y}} + \epsilon {\mathbb{I}_{N}\over N} = \sum_{j=0}^{N-1}\lambda_{j}(\epsilon)\ket{\psi_{j}(\epsilon)}\bra{\psi_{j}(\epsilon)}
    \label{eqn:bbb}
\end{equation}
where $N$ is the dimension of the symmetric subspace after loss of a particle. The spectral formula for the QFI is then given by
\begin{equation}
    \lim_{\epsilon\rightarrow 0}2\sum_{j,k}{(\lambda_{j}(\epsilon)-\lambda_{k}(\epsilon))^{2}\over \lambda_{j}(\epsilon)+\lambda_{k}(\epsilon)} \vert \langle \psi_{j}(\epsilon)\vert J_{y}\vert \psi_{k}(\epsilon) \rangle \vert^{2}
\end{equation}
which is computed analytically for $K\le 1$ and numerically for $K>1$ to give the data in Fig. 2a. For example, when $K=1$, two eigenvalues of (\ref{eqn:bbb}) are ${1-\epsilon \over 2}+{\epsilon\over N}$ and the remaining eigenvalues are ${\epsilon\over N}$. Computing the contributing Dicke state matrix elements and taking $\epsilon\rightarrow 0$ gives $\left( {N\over 2}\right)^{2}-1$, as stated in the main text.

To define the loss channel acting on an $L$ mode bosonic state of fixed particle number, it is convenient to consider sparse rank-$L$ tensors. Take $\vec{e}_{1}=(1,0,0,0)$, $\ldots$, $\vec{e}_{4}:=(0,0,0,1)$, and define the elements of the four-index tensor $Q$ by
\begin{equation}
    Q_{\vec{\ell}-\vec{e}_{j},\vec{\ell}}={\ell_{j}\over \Vert \vec{\ell}\Vert_{1}}. 
    \end{equation}
    Because tracing over one of the particles of $\ket{{N\over 4},{N\over 4},{N\over 4},{N\over 4}}$ reduces $\Vert \vec{\ell}\Vert_{1}$ by one, one obtains the statistical weights of the Dicke states after loss of $K$ particles as
    \begin{equation}
        p^{(K)}:=Q^{K}p^{(0)}
    \end{equation}
    where $p^{(0)}$ is a probability vector with unit weight on the vector index $\vec{\ell}=({N\over 4},{N\over 4},{N\over 4},{N\over 4})$.

\section{\label{sec:appc}Global saturation of $(1,1)$ element of QFI matrix for distributed interferometry}

As in Section 4 of the main text, we consider the distributed probe $\ket{\psi_{\text{doubled}}}:= \ket{{N\over 4},{N\over 4},{N\over 4},{N\over 4}}$ parametrized by the local phases $\theta_{1},\theta_{2}$ by
\begin{equation}
    U_{\vec{\theta}}:= e^{-i \theta_{1}J_{y}^{(a_{1},b_{1})} - i \theta_{2}J_{y}^{(a_{2},b_{2})}}.
\end{equation}
We consider the row vector of observables $\vec{O}=(O_{1},O_{2},O_{3},O_{4})$ where\begin{align}
    O_{1}&=J_{z}^{(a_{1},b_{1})2}\nonumber \\
    O_{2}&={1\over 2}(J_{+}^{(a_{1},b_{1})2}+h.c.)\nonumber \\
    O_{3}&=J_{z}^{(a_{2},b_{2})2} \nonumber \\
    O_{4}&={1\over 2}(J_{+}^{(a_{2},b_{2})2}+h.c.).
\end{align}Because there are two parameters instead of one, the inequality (11) is not immediately applicable. Instead, we use the matrix inequality
\begin{equation}
    M(\vec{\theta})\preceq \mathcal{F}(\vec{\theta})
\end{equation}
where the right hand side is the $2\times 2$ QFI matrix based on symmetric logarithmic derivatives and the matrix elements of $M(\vec{\theta})$ are
\begin{equation}
  M(\vec{\theta})_{i,j}=\del_{\theta_{i}}\langle \vec{O}\rangle_{\rho_{\vec{\theta}}}  \left( \text{Cov}_{\rho_{\vec{\theta}}}(\vec{O})\right)^{-1}\del_{\theta_{j}}\langle \vec{O}\rangle_{\rho_{\vec{\theta}}}^{\intercal} 
\end{equation}
with $\rho_{\vec{\theta}}:= U_{\vec{\theta}}\ket{\psi_{\text{doubled}}}\bra{\psi_{\text{doubled}}}U_{-\vec{\theta}}$ and $(\text{Cov}_{\rho_{\vec{\theta}}}(\vec{O}))_{i,j}={1\over 2}\langle ( O_{i}-\langle O_{i}\rangle_{\rho_{\vec{\theta}}})\circ (O_{j}-\langle O_{j}\rangle_{\rho_{\vec{\theta}}})\rangle_{\rho_{\vec{\theta}}}$ are the elements of the $4\times 4$ covariance matrix. However, $\langle O_{i}O_{i'} \rangle_{\rho_{\vec{\theta}}}= \langle O_{i} \rangle_{\rho_{\vec{\theta}}}\langle O_{i'} \rangle_{\rho_{\vec{\theta}}}$ for $i\in \lbrace 1,2\rbrace$, $i'\in \lbrace 3,4\rbrace$, so $\text{Cov}_{\rho_{\vec{\theta}}}(\vec{O})$ breaks into a direct sum of $2\times 2$ matrices. We therefore only need to consider $M(\vec{\theta})_{1,1}$. We have
\begin{equation}
    \del_{\theta_{1}}\langle \vec{O}\rangle_{\rho_{\vec{\theta}}} =  {N\over 8}\left( {N\over 4}+1\right)\sin 2\theta_{1}\begin{pmatrix}
        1& -1&0&0
    \end{pmatrix}
\end{equation}
so only the upper $2\times 2$ block of $\text{Cov}_{\rho_{\vec{\theta}}}(\vec{O})$ is relevant. From a table of at-most-fourth moments of spin operators in Dicke states, one obtains
\begin{align}
    \text{Cov}_{\rho_{\vec{\theta}}}(\vec{O})_{1,1}&= \sin^{4}\theta_{1} g(N) + t(N)\cos^{2}\theta_{1}\sin^{2}\theta_{1}  - t(N)^{2}\sin^{4}\theta_{1} \nonumber \\
    \text{Cov}_{\rho_{\vec{\theta}}}(\vec{O})_{2,2}&= \left(1+\cos^{4}\theta_{1}\right)g(N) +t(N)\cos^{2}\theta_{1}\sin^{2}\theta_{1}  - 2h(N)\cos^{2}\theta_{1}  - t(N)^{2}\sin^{4}\theta_{1}\nonumber \\
    \text{Cov}_{\rho_{\vec{\theta}}}(\vec{O})_{1,2}&=\cos^{2}\theta_{1}\sin^{2}\theta_{1}\left( g(N)- t(N) \right) - h(N)\sin^{2}\theta_{1} +t(N)^{2}\sin^{4}\theta_{1}
\end{align}
where
\begin{align}
t(N)&:= {1\over 2}\left( {N\over 4} \right)\left( {N\over 4}+1\right) \nonumber \\
    g(N)&:= {1\over 8}\left( 3\left( {N\over 4}\right)^{4}+6\left( {N\over 4}\right)^{3}+\left( {N\over 4}\right)^{2}-2\left( {N\over 4}\right)\right) \nonumber \\
    h(N)&:= {1\over 8}\left( \left( {N\over 4}\right)^{4}+2\left( {N\over 4}\right)^{3}+3\left( {N\over 4}\right)^{2}+2\left( {N\over 4}\right)\right)
\end{align}
are scaled polynomials of $N/4$.
To evaluate $M(\vec{\theta})_{1,1}$, it is sufficient to note that the determinant of the upper left $2\times 2$ block of $\text{Cov}_{\rho_{\vec{\theta}}}(\vec{O})$ is
\begin{equation}
    {\sin^{2}(2\theta_{1}) \over 16}\left( {N\over 4}\right)^{2}\left({N\over 4}+1\right)^{2}\left({N\over 4}+2\right)\left({N\over 4}-1\right)
\end{equation}
and that 
\begin{equation}
    \text{Cov}_{\rho_{\vec{\theta}}}(\vec{O})_{1,1}+\text{Cov}_{\rho_{\vec{\theta}}}(\vec{O})_{2,2}+2\text{Cov}_{\rho_{\vec{\theta}}}(\vec{O})_{1,2}={1\over 2}\left( {N\over 4}\right)\left( {N\over 4}+2\right)\left({N\over 4}+1\right)\left({N\over 4}-1\right).
\end{equation}
One gets that
\begin{equation}
    M(\vec{\theta})_{1,1}= \left( {N\over 8}\left( {N\over 4}+1\right) \right)^{2}\sin^{2} 2\theta_{1} { \text{(C9)}\over \text{(C8)} } = {N\over 8}(N+4)
\end{equation}
independent of $\theta_{1}$.

\end{document}